 \let\oldReturn\Return
 \renewcommand{\Return}{\State\oldReturn}
 \newtheorem{definition}{Definition}
 \newtheorem{theorem}{Theorem}
 \newtheorem{lemma}{Lemma}
 \newtheorem{assumption}{Assumption}
 \newtheorem{cor}{Corollary}
 \newtheorem{prop}{Proposition}
\def\href#1#2{#1 #2}
\title{\LARGE\bf Submodular Input Selection for Synchronization in Kuramoto  Networks}
\author{Dinuka Sahabandu$^1$, Andrew Clark$^2$, Linda Bushnell$^1$, and Radha Poovendran$^1$
	\thanks{$^1$D. Sahabandu, L. Bushnell, and R. Poovendran are with the Department of Electrical and Computer Engineering, University of Washington, Seattle, WA 98195 USA. \texttt{\{sdinuka,lb2,rp3\}@uw.edu}.}
	\thanks{$^2$A. Clark is with the Department of Electrical and Computer Engineering, Worcester Polytechnic Institute, Worcester, MA 01609 USA. \texttt{aclark@wpi.edu}.}
}
\begin{document}
\maketitle	
\begin{abstract}
Synchronization is an essential property of engineered and natural networked dynamical systems. The Kuramoto model of nonlinear synchronization has been widely studied in applications including entrainment of clock cells in brain networks and power system stability. Synchronization of Kuramoto networks has been found to be challenging in the presence of signed couplings between oscillators and when the network includes oscillators with heterogeneous natural frequencies. In this paper, we study the problem of minimum-set control input selection for synchronizing signed Kuramoto networks. We first derive sufficient conditions for synchronization in homogeneous as well as heterogeneous Kuramoto networks using a passivity-based framework. We then develop a submodular algorithm for selecting a minimum set of control inputs for a given Kuramoto network. We evaluate our approach through a numerical study on multiple classes of graphs, including undirected, directed, and  cycle graphs.
\end{abstract}

\section{Introduction}\label{sec:intro}

Synchronization of coupled nonlinear oscillators has been extensively studied in engineering, physics, and biology.  Applications in engineering include power system stability \cite{dorfler2013synchronization} and clock synchronization in wireless sensor networks \cite{schenato2007distributed}. In physics, synchronization has been used to study coupled pendulum clocks \cite{bennett2002huygens}. It has been employed in biology to understand the circadian rhythms in networks of bursting neurons \cite{herzog2007neurons} and to investigate the synchronous firing of cardiac pacemaker cells \cite{torre1976theory}. 

Kuramoto  dynamics \cite{kuramoto1975self} is a widely accepted model used to study synchronization in networked oscillators due to its ability to capture a variety of real-world synchronization phenomena \cite{acebron2005kuramoto}. Under the Kuramoto model each oscillator's dynamics consists of two components, its own natural frequency and positively- or negatively-weighted sinusoidal couplings with the neighboring oscillators. 

Synchronization in oscillator networks has been studied under two categories: networks with homogeneous natural frequencies (e.g., synchronization of coupled identical pendulum clocks \cite{bennett2002huygens}) and networks with heterogeneous natural frequencies (e.g., study of cardiac rhythms \cite{torre1976theory}). In Kuramoto networks with homogeneous natural frequencies, it is known that all oscillators will converge to the same phase angle (phase synchronization) from any initial condition when the network is undirected and contains only positive couplings \cite{jadbabaie2004stability}. Synchronization, however, may not be realized in homogeneous Kuramoto dynamics on weakly connected directed graphs or under signed couplings \cite{strogatz2000kuramoto}. Moreover, even with all positive couplings, heterogeneous Kuramoto networks may fail to achieve synchronization~\cite{jadbabaie2004stability}. In \cite{clark2017toward}, a subset of oscillators were chosen to drive a positively coupled Kuramoto network towards synchronization.


Achieving synchronization in any Kuramoto network requires answering the following questions. First, given any Kuramoto network, can we find a mechanism that drives it towards synchronization? Second, what are the conditions to drive a given Kuramoto network towards synchronization? %



In this paper we develop an analytical framework for selecting a minimum set of control inputs with constant and identical phases and natural frequencies to drive any given Kuramoto network towards synchronization.  To the best of our knowledge, this is the first paper to do so. We make the following contributions.   

\begin{itemize}
	\item We decompose the Kuramoto model into a negative feedback interconnection between a nonlinear subsystem describing the local node dynamics and a linear subsystem describing the inter-node coupling. 
	\item We derive sufficient conditions for synchronization in a broad class of networks, including homogeneous and heterogenous oscillators, undirected and directed graphs, and networks with signed coupling.
	\item We prove that selecting a minimum set of control inputs to satisfy the sufficient conditions is a submodular optimization problem, and develop a polynomial-time approximate algorithm with provable optimality bounds.
	\item We evaluate  our approach via a numerical study.
\end{itemize}

The rest of the paper is organized as follows. Section~\ref{sec:Related work} presents the related work. Section~\ref{sec:pre} provides needed preliminaries. Section~\ref{sec:game} presents the problem of selecting control inputs. Section~\ref{sec:results} presents the sufficient conditions for synchronization. Section~\ref{sec:alg} presents a submodular algorithm for selecting a minimum set of control inputs. Section~\ref{sec:sim} provides the simulation results. Section~\ref{sec:end} presents conclusions.

\section{Related work}\label{sec:Related work}


Most of the work in the literature focuses on studying the necessary/sufficient conditions to achieve phase/frequency synchronization in homogeneous/heterogeneous Kuramoto networks with positive couplings \cite{jadbabaie2004stability,bronski2012fully,rogge2004stability}. 
In \cite{jadbabaie2004stability,bronski2012fully} order parameter and Lyapunov theorem based approach was used to derive conditions for synchronization. In \cite{rogge2004stability}, conditions for frequency synchronization in a ring of unidirectionally coupled oscillators were studied by extending Gershgorin’s theorem. 
A passivity-based decomposition was proposed in \cite{mesbahi2010graph} to study synchronization in positively coupled, homogeneous Kuramoto networks of undirected graphs. We generalize this approach for deriving synchronization conditions in signed heterogeneous Kuramoto networks under various network topologies.

Recently, synchronization in signed Kuramoto networks with homogeneous/heterogeneous natural frequencies have received attention \cite{hong2011kuramoto,el2013synchronization,delabays2019kuramoto}. In \cite{hong2011kuramoto}  synchronization of Kuramoto oscillators with all-to-all signed couplings were studied. In \cite{el2013synchronization}  conditions for frequency synchronization in signed directed networks were derived. In \cite{delabays2019kuramoto}  the conditions that ensures the synchronization in signed acyclic directed oriented heterogeneous oscillator networks and signed oriented cyclic homogeneous oscillator networks were analyzed. Yet, the analytical frameworks considered in these papers do not generalize to studying synchronization conditions in signed Kuramoto networks with general undirected and directed graphs. 


 
 The effect of a single pacemaker (leader) on synchronization of Kuramoto networks with positive coupling under both homogeneous and heterogeneous natural frequencies  are studied in \cite{li2015synchronizing,wang2012exponential}. In \cite{clark2017toward} a submodular optimization framework for selecting a set of control input nodes in order to achieve synchronization in positively coupled Kuramoto networks was analyzed. A recent work in \cite{bosso2019global} developed a control strategy for the problem of leader-follower frequency synchronization in positively coupled Kuramoto networks, by exploiting the adaptive control framework. However, these models do not address synchronization of signed Kuramoto networks. A novel passivity-based framework and a submodular input selection algorithm presented in this paper enables deriving sufficient conditions for synchronization and steering a given Kuramoto network towards synchronization.

\section{Notation and Preliminaries}\label{sec:pre}

In this section we first introduce the notation used in this paper. Then we provide the background on passivity and  stability results from literature that have been used to derive the theoretical results presented in this paper.
\vspace{-1mm}
\subsection{Notation}

Let $R$ be a $m \times m$ matrix. Then $[R]_{ij}$ denotes the entry in $R$ corresponding to $i^{\text{th}}$ row and $j^{\text{th}}$ column, where $i,j \in \{1, \ldots, m\}$. Let $R^{T}$ denotes the transpose of matrix $R$. The notation $R >0$ implies that $R$ is a positive definite matrix. Let $\lambda_{\min}(R)$ denote the minimum eigenvalue of matrix $R$. The vectors $\mathbf{0}$ and $\mathbf{1}$ represent the all zeros and all ones vectors with appropriate dimensions, respectively. Let $\mathbb{E}(.)$ denote the standard expectation operator. The notations $||.||_{2}$ and $||.||_{\infty}$ denote the two-norm and infinity-norm of vectors.  


\vspace{-1mm}
\subsection{Passivity and Stability}
For a dynamical system $\varSigma$ with input $u(t)$ and output $y(t)$, passivity is defined as follows.

\begin{definition}[Passivity, \cite{brogliato2007dissipative}, Corollary 2.3]
	\label{prop:passivity-differential}
	Suppose that there exists a continuously differentiable function $V \geq 0$ and a measurable function $d$	such that $\int_{0}^{t}{d(s) \ ds} \geq 0$ for all $t$. Then if 
	\begin{equation}\label{inq:Passive}
	\dot{V}(t) \leq y^{T}(t)u(t) - d(t)
	\end{equation}
	for all $t$ and all $u(t)$, the system $\varSigma$ is passive. 
\end{definition}

Next we introduce the notion of $\delta$-Input Strictly Passive ($\delta$-ISP) system in the following definition.

\begin{definition}[$\delta$-Input Strictly Passive ($\delta$-ISP) system]
	\label{def:ISP}
	Assume there exists a continuously differentiable function $V(\cdot) \hspace{-2mm}\geq \hspace{-2mm}0$ and a measurable function $\beta(\cdot)$ such that $\int_{0}^{t}{\beta(s)ds} \geq 0$ for all $t \geq 0$. Then a system $\varSigma$ is said to be $\delta$-Input Strictly Passive ($\delta$-ISP) if there exists a $\delta > 0$ such that 
	\begin{equation}\label{eq:ISP}
	\dot{V}(t) \leq y^{T}(t)u(t) - \delta u^{T}(t)u(t) - \beta(t)
	\end{equation}
	for all $t$ and all $u(t)$.
\end{definition}

The following proposition provides the conditions for $\mathcal{L}_{2}$ stability of interconnected subsystems.

\begin{prop}[\cite{brogliato2007dissipative}, Corollary 5.3]
	\label{prop:L2_stability_Kuramoto}
	 Consider two subsystems $H_{1}$ and $H_{2}$ in a single channel negative feedback interconnection. If $H_{1}$ is passive and $H_{2}$ is $\delta$-ISP:
	\begin{enumerate}
		\item Closed loop system is $\mathcal{L}_{2}$ finite gain stable.
		\item There exists a $\delta > 0$ such that $\mathcal{L}_{2}$ gain of the system is bounded above by $1/\delta $ .
	\end{enumerate}
\end{prop}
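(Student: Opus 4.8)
This is a classical passivity theorem (the author even cites it as Corollary 5.3 from Brogliato et al.), so I want to reconstruct the standard small-gain/passivity argument rather than invent something new. The plan is to work directly from the two defining inequalities: the passivity inequality $\dot{V}_1 \leq y_1^T u_1 - d$ for $H_1$ (Definition~\ref{prop:passivity-differential}) and the $\delta$-ISP inequality $\dot{V}_2 \leq y_2^T u_2 - \delta u_2^T u_2 - \beta$ for $H_2$ (Definition~\ref{def:ISP}). First I would fix the negative feedback interconnection equations, which for a single channel read $u_1 = e_1 - y_2$ and $u_2 = y_1$ (equivalently $u_2 = e_2 + y_1$ depending on the reference inputs), where $e_1, e_2$ are the external excitation signals. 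The key observation is that $y_1^T u_1$ and $y_2^T u_2$ share cross terms that cancel when the two storage functions are added.

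\textbf{Main steps.} I would proceed as follows. Define the composite storage function $V = V_1 + V_2 \geq 0$ and differentiate, summing the two dissipation inequalities to get
\begin{equation}
\dot{V} \leq y_1^T u_1 + y_2^T u_2 - \delta u_2^T u_2 - d - \beta.
\end{equation}
Next I would substitute the interconnection constraints. Writing $u_1 = e_1 - y_2$ and $u_2 = e_2 + y_1$, the products expand so that the internal coupling terms $-y_2^T y_1 + y_1^T y_2$ cancel, leaving only terms in the external inputs $e_1, e_2$ and the strict term $-\delta u_2^T u_2$. Because $u_2 = e_2 + y_1$ carries the output of $H_1$, the $-\delta u_2^T u_2$ term lets me dominate the remaining cross terms via Young's inequality (completing the square): for any such quadratic, $e^T u_2 - \delta u_2^T u_2 \leq \frac{1}{4\delta} \|e\|^2$. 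Integrating $\dot{V}$ from $0$ to $T$ and using $V(T) \geq 0$, $V(0)$ finite, together with the nonnegativity of $\int d$ and $\int \beta$, yields a bound of the form $\int_0^T \|y\|^2 \, dt \leq \frac{1}{\delta^2}\int_0^T \|e\|^2 \, dt + C$, which is exactly $\mathcal{L}_2$ finite-gain stability with gain at most $1/\delta$, establishing both conclusions simultaneously.

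\textbf{Anticipated obstacle.} The routine part is the algebraic cancellation of the cross terms; the delicate part is bookkeeping the sign conventions of the feedback interconnection so that the $H_1$ output feeds the $H_2$ input (and thereby gets multiplied by the strict $\delta$ term), rather than the reverse. If the strict term attached to the \emph{passive} subsystem's input instead of the output being fed back, the completion-of-the-square step would fail, so I must align the labeling of $u_1, u_2, y_1, y_2$ with which block is $\delta$-ISP. A secondary subtlety is handling the auxiliary nonnegative integrals $\int d$ and $\int \beta$: these only help the inequality (they are subtracted from $\dot V$), so I would simply drop them after integration, but I must confirm their nonnegativity is genuinely guaranteed by the hypotheses rather than assumed. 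Once the correct channel assignment is fixed, the completion-of-the-square bound delivers both the stability conclusion and the explicit $1/\delta$ gain bound in one stroke.
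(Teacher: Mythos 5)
The paper offers no proof of this proposition: it is imported verbatim as Corollary 5.3 of Brogliato et al.\ \cite{brogliato2007dissipative}, so there is no in-paper argument to compare against. Your reconstruction is the standard textbook proof and is essentially sound: summing the storage functions $V_1+V_2$, cancelling the cross terms $-y_1^Ty_2+y_2^Ty_1$ under the interconnection $u_1=r-y_2$, $u_2=y_1$ (the paper's single-channel configuration has $e_2\equiv 0$, which is the case you need -- note that with a nonzero $e_2$ the leftover term $y_2^Te_2$ is \emph{not} dominated by $-\delta\|u_2\|^2$, so your parenthetical ``equivalently $u_2=e_2+y_1$'' should be discarded), dropping the nonnegative integrals of $d$ and $\beta$, and integrating. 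This yields exactly the gain bound $\|y_1\|_{L_2}\leq \frac{1}{\delta}\|r\|_{L_2}$ (up to an initial-storage bias), which is how the paper uses the result in Lemma~\ref{cor:L2_stability_Kuramoto}.

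One intermediate step needs repair, though the conclusion survives: your completion-of-squares inequality $e^Tu_2-\delta u_2^Tu_2\leq \frac{1}{4\delta}\|e\|^2$ discards the output energy entirely, so integrating it gives only $0\leq V(0)+\frac{1}{4\delta}\|e\|_{L_2}^2$ and no bound on $\int\|y\|^2$. You must retain half of the strict term, e.g.\ via the split Young inequality $y_1^Tr\leq \frac{\delta}{2}\|y_1\|^2+\frac{1}{2\delta}\|r\|^2$, which gives $\dot V\leq \frac{1}{2\delta}\|r\|^2-\frac{\delta}{2}\|y_1\|^2$ and hence $\int_0^T\|y_1\|^2\,dt\leq \frac{1}{\delta^2}\int_0^T\|r\|^2\,dt+\frac{2V(0)}{\delta}$ (alternatively, integrate $\delta\int\|y_1\|^2\leq\int y_1^Tr+V(0)$ and apply Cauchy--Schwarz). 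With that substitution your argument delivers both claims of the proposition with the gain $1/\delta$, where $\delta=\lambda_{\min}(R(S))$ in the paper's application.
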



The next result establishes a synchronization condition for a linear system with time-varying system matrix $A(t)$.

\begin{prop}[\cite{moreau2004stability}, Theorem~1]\label{prop: LTV consensus}
	Consider the system 
	\begin{equation}\label{eq:LTV}
	\dot{x}(t) = A(t)x(t),
	\end{equation}
	where $A(t)$ is a bounded and piecewise continuous function of $t$. For every $t$, $A(t)$ is Metzler with zero row sums. For any $\Delta > 0$ and any matrix $B$, let the $\Delta$~digraph to be defined as the digraph where an edge $(i, j)$ exists if $B_{ij} \geq 0$.
	
	 If there is an index $k \in \{1, \ldots, n\}$, a threshold value $\Delta > 0$ and an interval length $\hat{T} > 0$ such that for all $t \in \mathbb{R}$ the $\Delta$-digraph associated to $\int_{t}^{t + \hat{T}}A(\tau)d\tau$ has the property that all nodes may be reached from the node $k$, then the equilibrium set of synchronized states is uniformly exponentially stable. Solution of Eqn.~\eqref{eq:LTV} converge to a common value as $t \rightarrow \infty$.
\end{prop}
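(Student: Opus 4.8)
The plan is to prove this via a Lyapunov-type argument built on the spread (diameter) of the state vector, exploiting the order-preserving structure that Metzler matrices with zero row sums enjoy. First I would record that the zero-row-sum condition gives $A(t)\mathbf{1} = \mathbf{0}$ for all $t$, so every synchronized state $x = c\mathbf{1}$ with $c \in \mathbb{R}$ is an equilibrium of \eqref{eq:LTV}; this pins down the equilibrium set whose stability must be established. I would then introduce the candidate function $V(x) = \max_i x_i - \min_i x_i$, which measures the spread of the components and vanishes exactly on the synchronized set. The first key step is to show that $V(x(t))$ is non-increasing along trajectories. Letting $i^\ast(t)$ be an index attaining the maximum, the zero-row-sum property lets us write $\dot{x}_{i^\ast} = \sum_{j} [A(t)]_{i^\ast j}\,(x_j - x_{i^\ast}) \le 0$, since the off-diagonal entries $[A(t)]_{i^\ast j}$ are nonnegative (Metzler) and $x_j - x_{i^\ast} \le 0$ by maximality; a symmetric argument bounds the minimizing component from below. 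Hence the convex hull of the components never expands.

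The second, and harder, step is to upgrade this monotonicity to strict contraction over each interval of length $\hat{T}$, and this is where I expect the main obstacle to lie. Here I would invoke the integral connectivity hypothesis. The heuristic is that an extreme component can be pulled toward the bulk only at a rate set by the cumulative coupling it receives, which is exactly what $\int_{t}^{t+\hat{T}} A(\tau)\,d\tau$ encodes. Because every node is reachable from node $k$ in the $\Delta$-digraph associated to this integrated matrix, the influence of node $k$, and by transitivity of the interior of the convex hull, propagates to every node over the interval, forcing the extreme components strictly inward. The goal is a contraction inequality $V(x(t+\hat{T})) \le (1-\varepsilon)\,V(x(t))$ with $\varepsilon \in (0,1)$ depending only on $\Delta$, $\hat{T}$, the dimension $n$, and a uniform bound on $\|A(t)\|$. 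The difficulty is quantitative: connectivity holds only in the time-averaged sense, so one must track how influence accumulates across the interval while the coupling weights themselves vary in $t$, rather than reading contraction off a single fixed graph.

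The final step is routine once the contraction inequality is available. Iterating it gives $V(x(t_0 + m\hat{T})) \le (1-\varepsilon)^m\,V(x(t_0))$, and combining this with the uniform boundedness of $V$ on sub-intervals (a consequence of the bound on $\|A(t)\|$) yields $V(x(t)) \le C\,e^{-\gamma(t - t_0)}$ with $\gamma = -\tfrac{1}{\hat{T}}\log(1-\varepsilon) > 0$. This decay is uniform in the initial time $t_0$, which establishes uniform exponential stability of the synchronized equilibrium set and, in particular, convergence of all components to a common value as $t \to \infty$. I would emphasize that the argument uses only the Metzler and zero-row-sum structure, never symmetry or sign-definiteness of $A(t)$; this is precisely what makes it applicable to the directed and signed coupling settings targeted later, and it follows the spirit of the set-valued Lyapunov approach of \cite{moreau2004stability}.
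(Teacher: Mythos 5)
You should first be aware that the paper does not prove this statement at all: it is imported verbatim as a proposition, with citation, from Theorem~1 of \cite{moreau2004stability}, so there is no internal proof to compare against. Your attempt therefore has to be judged on its own terms, against Moreau's actual argument. Your skeleton --- the spread function $V(x)=\max_i x_i - \min_i x_i$, its monotonicity along trajectories from the Metzler and zero-row-sum structure, and iteration of a windowed contraction estimate to get uniform exponential decay --- is the right one and matches the spirit of Moreau's convex-hull (set-valued Lyapunov) proof. (One side remark: the proposition as transcribed in the paper contains a typo, defining the $\Delta$-digraph by $B_{ij}\geq 0$ rather than $B_{ij}\geq \Delta$; your reading, in which $\Delta$ is a genuine threshold, is the correct one.)

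The genuine gap is that the step you flag as ``the harder step'' is the entire mathematical content of the theorem, and the quantitative target you set for it is false as stated: one cannot in general obtain $V(x(t+\hat{T})) \leq (1-\varepsilon)\,V(x(t))$ over a \emph{single} window of length $\hat{T}$. Reachability from $k$ in the $\Delta$-digraph of $\int_{t}^{t+\hat{T}}A(\tau)\,d\tau$ carries no information about the temporal order in which the edges of a path accumulate weight inside the window. If, on a path $k \rightarrow j \rightarrow l$, the downstream coupling $(j,l)$ is active only early in the window and the upstream coupling $(k,j)$ only late, then node $l$ receives no influence from $k$ within that window; concretely, if $l$ attains the maximum, is coupled only to $j$ with $x_j = x_l$ during the activity of $(j,l)$, and $k$ (holding the minimum) has no in-edges, the spread is unchanged over the whole window. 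The correct contraction statement holds only over on the order of $(n-1)$ consecutive windows, $V(x(t+(n-1)\hat{T})) \leq (1-\varepsilon)\,V(x(t))$, and proving it requires the propagation lemma you omit: once a node has been influenced, directly or transitively, by node $k$, it remains displaced from the extremes by an amount that decays at a rate controlled by the uniform bound on $\|A(t)\|$ (via a Gronwall-type estimate) and is renewed through each newly activated edge, yielding after $p \leq n-1$ windows a displacement of order $c\,\Delta^{p}\,e^{-L(n-1)\hat{T}}\,V(x(t))$ for every node. Tracking that quantitative bookkeeping is the actual work of Moreau's theorem; without it, your proposal is an outline with its central step replaced by a heuristic, and with the per-window contraction claim needing repair before the final iteration argument (which is otherwise fine) can be run.
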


\section{Problem Formulation}\label{sec:game}
We consider a directed Kuramoto network of $n$ oscillators indexed $\{1, \ldots, n\}$ with couplings among the oscillators defined by the edge set $E:= \{1, \ldots, m\}$. An ordered pair $(i,j) \in E$ denotes an edge from oscillator $i$ to oscillator $j$ and implies that the oscillator $j$ is influenced by the oscillator $i$. Let the underlying graph structure of the Kurmaoto network is denoted by $\mathcal{G} = (V, E)$, where the set $V$ denotes the set of nodes\footnote{In this paper we use the words nodes and oscillators interchangeably.} that represent the Kuramoto oscillators. Define a vector of length $n$ by $\omega = \begin{bmatrix} \omega_{i}\end{bmatrix}^{n}_{i = 1}$ where each $\omega_{i} \in \mathbb{R}$ represents the natural frequency associated with the $i^{\text{th}}$ oscillator. Let $N_{\text{in}}(i):= \{j: (j,i) \in E\}$ denote the set of oscillators that have an incoming edge to oscillator $i$.

Kuramoto dynamics of the $i^{\text{th}}$ oscillator is given by
\begin{equation}
	\dot{\theta}_{i}(t) = \omega_{i} -\sum_{j \in N_{\text{in}}(i)}{{K_{ji}}\sin{(\theta_{i}(t)-\theta_{j}(t))}}	
\end{equation}
where the coupling coefficients (i.e., edge weights) $K_{ji}$ are nonzero real numbers that characterize the influence of oscillator $j$ on oscillator $i$'s dynamics. We define the $n \times m$ incidence matrix $D$ by
\begin{displaymath}
	D_{ie} = \left\{
	\begin{array}{ll}
		1, & e=(j,i) \ \mbox{for some } j \\
		-1, & e=(i,j) \ \mbox{for some } j \\
		0, & \mbox{else}	
	\end{array}
	\right.
\end{displaymath}
We define a matrix $\hat{D}$ by 
\begin{displaymath}
	\hat{D}_{ie} = \left\{
	\begin{array}{ll}
		1, & e = (j,i) \ \mbox{for some } j \\
		0, & \mbox{else}	
	\end{array}
	\right.
\end{displaymath}
Next, we define $m \times m$ diagonal matrix $K$ by $K_{ee} = K_{ij}$ where $e = (i,j)$. Under these definitions, the dynamics of the network can be written in vector form as 
\begin{equation}
	\dot{\theta}(t) = \omega -\hat{D}K\sin{(D^{T}\theta(t))}
\end{equation}
where $\sin{\theta} = (\sin{\theta_{i}}: i=1,\ldots,n)$. 

We assume that the network operates with a set of inputs, $S \subseteq \{1,\ldots,n\}$. Initial phase angles and natural frequencies associated with input nodes are set to zero (i.e, for all $i \in S$, $\theta_{i}(0) = 0$ and $\omega_{i} = 0$). Therefore, input nodes maintain a constant phase of zero regardless of the inputs of their neighbors. In such networks, the non-input node dynamics can be written as 
\begin{equation}\label{eq:Kuramoto with Inputs}
	\dot{\theta}_{i}(t) = \omega_{i} -\hspace{-3mm}\sum_{j \in N(i) \cap S}{{ \hspace{-2.5mm}K_{ji}}\sin{\theta_{i}(t)}} - \hspace{-2mm}\sum_{j \in N(i) \setminus S}{\hspace{-2.5mm}{K_{ji}}\sin{(\theta_{i}(t)-\theta_{j}(t))}},	
\end{equation}
while $\theta_{i}(t) \equiv 0$ for all $i \in S$. 

Let $E(S)$ denote the set of edges that are incoming to input nodes. In networks with inputs, we define the $(n-|S|) \times (m-|E(S)|)$ matrix $D(S)$   as
\begin{displaymath}
	D(S)_{ie} = \left\{
	\begin{array}{ll}	
		1, & i \notin S, e = (j,i) \ \mbox{ for some } j \\
		-1, & i \notin S, e = (i,j) \ \mbox{ for some } j \notin S \\
		0, & \mbox{else} 
	\end{array}
	\right.
\end{displaymath}
The effect of $S$ on $D$ is to remove all rows of $D$ related to $S$ and all columns of $D$ representing incoming edges to $S$. The $(n-|S|) \times (m-|E(S)|)$ matrix $\hat{D}(S)$ is defined as 
\begin{displaymath}
	\hat{D}(S)_{ie} = \left\{
	\begin{array}{ll}
		1, & i \notin S, e = (j,i) \mbox{ for some } j \\
		0, & \mbox{else}	
	\end{array}
	\right.
\end{displaymath}
Let $(m-|E(S)|)\times (m-|E(S)|)$ diagonal matrix $K(S)$ defined by $K_{ee} = K_{ji}$ where $e = (j,i)$ and $i \neq S$. Furthermore, with abuse of notation we let $\omega(S) = \begin{bmatrix} \omega_{i}\end{bmatrix}_{i \notin S}$.  Then Kuramoto model can be written as 
\begin{equation}
	\label{eq:Kuramoto-with-inputs}
	\dot{\theta}(t) = \omega(S) -\hat{D}(S)K(S)\sin{(D(S)^{T}\theta(t))}.	
\end{equation}

Setting $z(t) = D(S)^{T}\theta(t)$, we arrive at the following equivalent model that gives the dynamics in terms of the inter oscillator phase angle differences, $z(t)$.
\begin{equation}
\label{eq:Kuramoto-with-inputs-equiv}
\dot{z}(t) = D(S)^{T}\omega(S) -D(S)^{T}\hat{D}(S)K(S)\sin{z(t)}.
\end{equation}

Next we define the notion of phase and frequency synchronization in Kuramoto networks. 
\begin{definition}[Phase synchronization]
	Kuramoto network is said to achieve phase synchronization if and only if $\lim_{t \rightarrow \infty}{\theta(t)} \rightarrow c_{p} \mathbf{1}$, where $c_{p} \in \mathbb{R}$. 
\end{definition}

\begin{definition}[Frequency synchronization]
	Kuramoto network is said to achieve frequency synchronization if and only if $\lim_{t \rightarrow \infty}{\dot{\theta}(t)} = c_{f} \mathbf{1}$, where $c_{f} \in \mathbb{R}$ or equivalently, $\lim_{t \rightarrow \infty}{\theta(t)} = \bar{c}_{f}$, where $\bar{c}_{f} \in \mathbb{R}^{m}$.
\end{definition}

Moreover, since the input nodes have fixed states of $0$, $c_{p} = c_{f} =  0$. In what follows, we use the word synchronization and frequency synchronization interchangeably. 

\noindent{\bf Problems Studied:} Selecting a minimum set of control inputs set to ensure (i)~frequency synchronization in Kuramoto networks with homogeneous natural frequencies~(Section~\ref{subsec:Freq_homo}), (ii)~phase synchronization in Kuramoto networks with homogeneous natural frequencies~(Section~\ref{subsec:phase_homo}) and (iii)~frequency synchronization in Kuramoto networks with heterogeneous natural frequencies~(Section~\ref{subsec:Freq_hetro}). 


\section{Analytical framework for selecting inputs}\label{sec:results}

Below, we present sufficient conditions for synchronization in the settings (i)-(iii) defined above.

\subsection{Frequency synchronization: Homogeneous case}\label{subsec:Freq_homo}
 
Under this case, we first derive sufficient conditions for frequency synchronization and present a framework for selecting set of control inputs to drive Kuramoto network with homogeneous natural frequencies (i.e., $\omega_{i} = \omega_{j} $ for all $i,j$) to synchronization. First note that, by switching to a rotating frame, it can be shown that we can set $\omega_{i} \equiv 0$ for all $i \in \{1, \ldots, n\}$, without loss of generality \cite{strogatz2000kuramoto}.
 
 Let $M(S) \hspace{-0.5mm}\triangleq \hspace{-0.5mm}D(S)^{T}\hat{D}(S)K(S)$ and $R(S) \hspace{-0.5mm}\triangleq \hspace{-0.5mm}\frac{M(S) + M(S)^{T}}{2}$ (Matrices $M$ and $R$ are defined similarly by omitting the restriction to the set $S$). In order to derive sufficient conditions for synchronization, we represent the homogeneous Kuramoto network by a negative feedback interconnection between two subsystems as shown in Figure~\ref{fig:Kuramoto-negative-feedback} (with noting that $r(t) = 0$ when the natural frequencies, $\omega_{i} \equiv 0$ for all $i$). The advantage of this approach is that $H_{1}$ is nonlinear but does not depend on the input set $S$, while the bottom block depends on $S$ but is linear. 
 
 

\begin{figure}[!ht]
	\centering
	\includegraphics[width=3.35in]{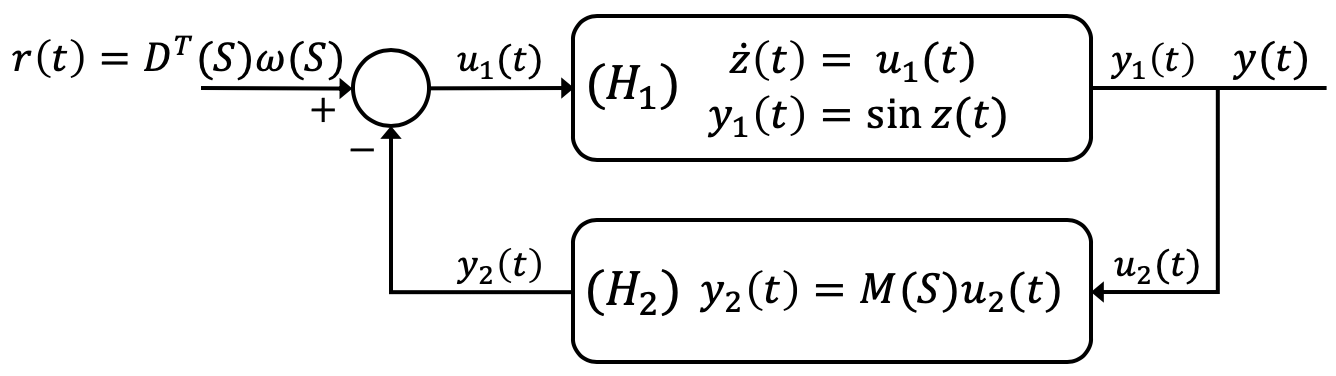}
	\caption{Decomposition of (\ref{eq:Kuramoto-with-inputs-equiv}) as a single channel negative feedback interconnection with a constant input $D^{T}(S)\omega(S)$.}
	\label{fig:Kuramoto-negative-feedback}	
\end{figure}



In the following we present a sufficient condition for frequency synchronization.  

\begin{theorem}\label{thm:suf_homo}
	A homogeneous Kuramoto network achieves frequency synchronization if $R(S) > 0$.
	\newline
	\begin{proof} 
		Consider the Lyapunov storage function $V(z) = \sum_{e \in E}{(1-\cos{z_{e}(t)})}.$ This storage function satisfies 
		\begin{equation}\label{eq:storage}
		\dot{V}(z) = \sum_{e \in E}{[\sin{z_{e}(t)}\dot{z}_{e}(t)]} = u_{1}(t)^{T}y_{1}(t). 
		\end{equation}
		Note that $y_{1}(t) = u_{2}(t)$ and $u_{1}(t) = -y_{2}(t)$. Therefore, 
		 \begin{equation*}
		 \begin{split}
		 \dot{V}(z) =& -u_{2}(t)^{T}y_{2}(t) = -u_{2}(t)^{T}M(S)u_{2}(t) \\
		                 =& -u_{2}(t)^{T}\frac{M(S) + M(S)^{T}}{2}u_{2}(t) \\
		                 =& -u_{2}(t)^{T}{R(S)}u_{2}(t) \leq 0
		 \end{split}
		 \end{equation*}
		 since $R(S) > 0$. Hence, by LaSalle's Invariance Principle, the state $z(t)$ converges to the set $\{z : \dot{V}(z) = 0\}$. Since $\dot{V}(z)$ is zero if and only if $u_{2}(t) = 0$ as $R(S)>0$, this set is equivalent to $\{z : \sin{z} = 0\}$, which is the discrete set of points satisfying $[z]_{e} = k_{e}\pi$ for some set of integers $\{k_{e} : e \in E\}$. Since the set of zeros of $\dot{V}(z)$ is discrete, the state trajectory $z(t)$ converges to a single fixed point, and thus $\lim_{t \rightarrow \infty}{\dot{z}(t)} = 0$, implying synchronization.
		\end{proof}
\end{theorem}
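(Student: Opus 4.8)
The plan is to run a Lyapunov argument anchored to LaSalle's invariance principle, using the passivity decomposition only to motivate the right storage function. In the homogeneous case we may take $\omega_i \equiv 0$, so the phase-difference model \eqref{eq:Kuramoto-with-inputs-equiv} collapses to $\dot{z}(t) = -M(S)\sin z(t)$, and the constant input $r(t)=D^{T}(S)\omega(S)$ in Figure~\ref{fig:Kuramoto-negative-feedback} vanishes. Frequency synchronization, after accounting for the zero-phase inputs, is exactly the statement $\dot{z}(t)\to 0$, so the goal is to drive the coupling signal $\sin z(t)$ to the kernel of the dynamics.

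First I would introduce the nonnegative storage function $V(z)=\sum_{e\in E}(1-\cos z_e)$, which is precisely the storage function of the nonlinear block $H_1$. Differentiating along trajectories gives $\dot{V}=(\sin z)^{T}\dot{z}$, which in the interconnection variables is the power product $u_1^{T}y_1$; substituting the feedback relations $y_1=u_2$ and $u_1=-y_2=-M(S)u_2$ with $u_2=\sin z$ yields $\dot{V}=-(\sin z)^{T}M(S)(\sin z)$. Next I would invoke the elementary fact that a quadratic form depends only on the symmetric part of its matrix, so that $(\sin z)^{T}M(S)(\sin z)=(\sin z)^{T}R(S)(\sin z)$, where $R(S)\triangleq\tfrac{M(S)+M(S)^{T}}{2}$. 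The hypothesis $R(S)>0$ then forces $\dot{V}\le 0$, with equality exactly when $\sin z=0$.

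With $V$ nonincreasing and bounded below, LaSalle's invariance principle says every trajectory approaches the largest invariant set inside $\{z:\dot{V}(z)=0\}$. Because $R(S)$ is strictly positive definite, this set is not a lower-dimensional subspace but the discrete collection $\{z:\sin z=0\}$, i.e.\ the isolated points with each coordinate $z_e=k_e\pi$ for integers $k_e$. At any such point $\dot{z}=-M(S)\sin z=0$, so these are genuine equilibria. I would then argue that a continuous trajectory whose distance to a set of isolated equilibria tends to zero cannot shuttle between distinct points (they are separated by a fixed positive gap), hence must converge to a single one; at that point $\dot{z}(t)\to 0$, which is frequency synchronization.

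The step I expect to be the crux is precisely this upgrade from convergence-to-the-set to convergence-to-a-point. LaSalle alone guarantees only approach to the invariant set, and if $\dot{V}=0$ held on a continuum the argument would stall; strict definiteness of $R(S)$ is exactly what shrinks the zero set down to $\sin z=0$ and makes its points isolated. I would therefore take care to verify the discreteness/isolation of the equilibrium set, since that is the property that both justifies the single-limit-point conclusion and yields $\dot{z}\to 0$ rather than mere boundedness.
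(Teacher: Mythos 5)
Your proposal is correct and follows essentially the same route as the paper's proof: the storage function $V(z)=\sum_{e\in E}(1-\cos z_e)$, the feedback substitutions $y_1=u_2$, $u_1=-y_2$ giving $\dot{V}=-(\sin z)^{T}R(S)(\sin z)\le 0$, and LaSalle's principle with the discreteness of $\{z:\sin z=0\}$ upgrading set-convergence to convergence to a single equilibrium, hence $\dot{z}\to 0$. Your explicit verification that the zeros of $\dot{V}$ are isolated equilibria (and that a trajectory cannot shuttle between them) makes precise the step the paper states more briefly, but it is the same argument.
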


In order to develop a mechanism to select control inputs, we re-index the nodes and edges in $\mathcal{G}$ as follows. We assume that the set of non-input nodes are indexed in $\{1,\ldots,n-|S|\}$, and the set of input nodes is are indexed in $\{n-|S|+1,\ldots,n\}$. Let $\overline{S}$ denote the set of non-input nodes. For two distinct sets of nodes, $S_{1}$ and $S_{2}$, let $E(S_{1},S_{2})$ denote the set of edges from the nodes in $S_{1}$ to the nodes in $S_{2}$. We then partition the edge set as $E = E(\overline{S}, \overline{S}) \cup E(S, \overline{S}) \cup E(\overline{S},S) \cup E(S,S)$ and assume that the edges are indexed in this order. Furthermore, for a matrix $Q$ associated with graph $\mathcal{G}$ and set of nodes $S_3$, $S_4$ and $S_5$, let the matrix notation $Q^{S_3}_{S_{4}, S_{5}}$ denote the rows and columns of $Q$ restricted to the node set $S_3$ and edges from the nodes in $S_{4}$ to the nodes in $S_{5}$.

Using the new indices, we rewrite $M$ and $M(S)$ as
\begin{equation*}
\label{eq:M0}
M = \left(
\begin{array}{cccc}
(D_{\overline{S},\overline{S}}^{\overline{S}})^{T}\hat{D}_{\overline{S},\overline{S}}^{\overline{S}}K_{\overline{S},\overline{S}}^{\overline{S}} & (D_{\overline{S},\overline{S}}^{\overline{S}})^{T}\hat{D}_{S,\overline{S}}^{\overline{S}} K_{S,\overline{S}}^{\overline{S}}  \\
(D_{S,\overline{S}}^{\overline{S}})^{T}\hat{D}_{\overline{S},\overline{S}}^{\overline{S}} K_{\overline{S},\overline{S}}^{\overline{S}}& (D_{S,\overline{S}}^{\overline{S}})^{T}\hat{D}_{S,\overline{S}}^{\overline{S}}K_{S,\overline{S}}^{\overline{S}}  \\
(D_{\overline{S},S}^{\overline{S}})^{T}\hat{D}_{\overline{S},\overline{S}}^{\overline{S}} K_{\overline{S},\overline{S}}^{\overline{S}}& (D_{\overline{S},S}^{\overline{S}})^{T}\hat{D}_{S,\overline{S}}^{\overline{S}}K_{S,\overline{S}}^{\overline{S}} \\
0 & 0 
\end{array}
\right.
\end{equation*}
\hspace{3mm}
\begin{equation}
\label{eq:M}
\left.
\begin{array}{cccc}
0 & 0 \\
(D_{S,\overline{S}}^{S})^{T}\hat{D}_{\overline{S},S}^{S}K_{\overline{S},S}^{S} & (D_{S,\overline{S}}^{S})^{T}\hat{D}_{S,S}^{S}K_{S,S}^{S}  \\
(D_{\overline{S},S}^{S})^{T}\hat{D}_{\overline{S},S}^{S}K_{\overline{S},S}^{S}  & (D_{\overline{S},S}^{S})^{T}\hat{D}_{S,S}^{S}K_{S,S}^{S}  \\
(D_{S,S}^{S})^{T}\hat{D}_{\overline{S},S}^{S}K_{\overline{S},S}^{S} & (D_{S,S})^{T}\hat{D}_{S,S}^{S}	K_{S,S}^{S} 
\end{array}
\right)
\end{equation}

\begin{equation}
\label{eq:MS}
M(S) \hspace{-0.9mm}= \hspace{-0.9mm}\left(
\begin{array}{cc}
\hspace{-2mm}(D_{\overline{S},\overline{S}}^{\overline{S}})^{T}\hat{D}_{\overline{S},\overline{S}}^{\overline{S}} K_{\overline{S},\overline{S}}^{\overline{S}}& (D_{\overline{S},\overline{S}}^{\overline{S}})^{T}\hat{D}_{S,\overline{S}}^{\overline{S}}K_{S,\overline{S}}^{\overline{S}} \hspace{-2mm} \\
\hspace{-2mm}(D_{S,\overline{S}}^{\overline{S}})^{T}\hat{D}_{\overline{S},\overline{S}}^{\overline{S}} K_{\overline{S},\overline{S}}^{\overline{S}}& (D_{S,\overline{S}}^{\overline{S}})^{T}\hat{D}_{S,\overline{S}}	K_{S,\overline{S}}^{\overline{S}} \hspace{-2mm}
\end{array}
\right)
\end{equation}

Theorem~\ref{thm:input selection} gives a sufficient condition for synchronization via selecting set of control input nodes.

\begin{theorem}~\label{thm:input selection}
	Removing a subset of rows and columns from matrix $R$  such that $R(S) > 0$ suffices to drive a homogeneous Kuramoto network to frequency synchronization by selecting a set of control input nodes $S$.
	\newline
	\begin{proof}
		The matrix $M(S)$ in Eqn.~(\ref{eq:MS}) is equal to the $2 \times 2$ top-left block submatrix of $M$ in Eq. (\ref{eq:M}). Hence $M(S)$ can be obtained from $M$ by removing a subset of rows and columns. As consequence, the matrix $R(S)$ can be obtained from the matrix $R$ by removing a subset of rows and columns. Continuing this removal process until $R(S)>0$ drives the homogeneous Kuramoto networks towards the synchronization by Theorem~\ref{thm:suf_homo}.
		\end{proof}
\end{theorem}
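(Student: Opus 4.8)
The plan is to reduce the statement to Theorem~\ref{thm:suf_homo} by showing that the operation ``select control input set $S$'' acts on the coupling matrix $M$, and hence on $R$, exactly as ``delete a symmetric set of rows and columns.'' Once that correspondence is in place, the claim follows by iterating the deletion until positive definiteness of $R(S)$ is attained and then invoking Theorem~\ref{thm:suf_homo} on the reduced network.

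First I would fix the edge partition $E = E(\overline{S},\overline{S}) \cup E(S,\overline{S}) \cup E(\overline{S},S) \cup E(S,S)$ introduced before the theorem and set $I := E(\overline{S},\overline{S}) \cup E(S,\overline{S})$, the set of edges whose head lies in $\overline{S}$, equivalently the edges not incoming to $S$. The core step is to verify that $M(S)$ in Eqn.~(\ref{eq:MS}) is precisely the principal submatrix of $M$ supported on $I$, i.e.\ the top-left $2\times 2$ block of $M$ in Eqn.~(\ref{eq:M}). I would argue this entrywise: since $K$ is diagonal, $M_{ef} = \sum_{i} D_{ie}\hat{D}_{if}K_{ff}$, and $\hat{D}_{if}\neq 0$ only when $i$ is the head of $f$. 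For $e,f \in I$ the head $i_f$ of $f$ lies in $\overline{S}$, so the sum localizes to that single row $i_f$, which is retained in $D(S)$; moreover for such $e$ the ``$-1$'' case in the definition of $D(S)$ coincides with that of $D$, because the head of $e$ is also in $\overline{S}$ and hence not in $S$. Thus $M(S)_{ef} = M_{ef}$ for all $e,f \in I$, confirming the block identification (which one may also simply read off by comparing the block expressions (\ref{eq:M}) and (\ref{eq:MS})).

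Next I would exploit that this deletion is \emph{symmetric}: the same index set $I$ is removed from both the rows and the columns of the edge-indexed matrix $M$. Consequently, taking the symmetric part commutes with the restriction, since $(M|_{I,I})^{T} = (M^{T})|_{I,I}$, and therefore $R(S) = \tfrac{1}{2}\bigl(M(S)+M(S)^{T}\bigr) = \tfrac{1}{2}\bigl(M+M^{T}\bigr)\big|_{I,I} = R|_{I,I}$. Hence $R(S)$ is obtained from $R$ exactly by deleting the rows and columns indexed by $E(\overline{S},S)\cup E(S,S)$, the edges incoming to the chosen input nodes. If $S$ is enlarged until $R(S) > 0$, Theorem~\ref{thm:suf_homo} applies to the reduced network and yields frequency synchronization, which is the assertion.

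The main obstacle I anticipate is the entrywise block identification in the second paragraph: one must check that deleting the input rows of $D$ together with the input-incoming columns of $D$, $\hat{D}$, and $K$ really reproduces the top-left block of the full product $D^{T}\hat{D}K$, rather than introducing spurious cross terms. The delicate point is that $D^{T}\hat{D}$ sums over \emph{all} nodes, so I must show that the contributions from input-node rows vanish on the retained block, which is exactly what the localization of $\hat{D}$ to head nodes guarantees, and that the sign convention of $D(S)$ agrees with that of $D$ on the edges of $I$.
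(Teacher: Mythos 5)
Your proposal is correct and takes essentially the same route as the paper: identifying $M(S)$ with the top-left block of $M$ (so that $R(S)$ is a principal submatrix of $R$), then enlarging $S$ until $R(S)>0$ and invoking Theorem~\ref{thm:suf_homo}. The only difference is that you verify the block identification entrywise and note explicitly that symmetric deletion commutes with taking the symmetric part, details the paper simply reads off by comparing Eqns.~(\ref{eq:M}) and~(\ref{eq:MS}).
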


\subsection{Phase synchronization: Homogeneous case}\label{subsec:phase_homo}
The following theorem gives a necessary condition for phase synchronization in homogeneous Kuramoto networks.

\begin{theorem}
	Given $\theta_{i}(t)$ for all $i \notin S$ are initialized around the neighborhood of origin (i.e., $0$), homogeneous Kuramoto networks achieve phase synchronization if $R(S) > 0$.
	\newline
	\begin{proof}
		Consider the following linearized version of the equivalent system given in \eqref{eq:Kuramoto-with-inputs-equiv} (with $\omega \equiv  0$) at the origin.
		\begin{equation}\label{eq:linearized}
		\dot{\tilde{z}}(t) = \tilde{A}~\tilde{z}(t),
		\end{equation}
		where  $\tilde{A} = \frac{\partial \{-D^{T}\hat{D}K \sin z(t)\}}{\partial {z(t)}}\vert_{z(t) = 0} = -M(S).$
		Then linear time-invariant system in Eqn.~\eqref{eq:linearized} is asymptotically stable if and only if all the real parts of the eigenvalues of $M(S)$ are positive. $R(S)>0$ ensures this condition. Since the system has a unique fixed point at $\tilde{z}(t) = 0$,  $z(t) \rightarrow 0$ when $t \rightarrow \infty$ and hence phase synchronization is achieved.
	\end{proof}
\end{theorem}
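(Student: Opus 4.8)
The plan is to prove local asymptotic stability of the origin for the phase-difference dynamics via Lyapunov's indirect method. Setting $\omega \equiv 0$ in the homogeneous case reduces Eqn.~\eqref{eq:Kuramoto-with-inputs-equiv} to $\dot{z}(t) = -M(S)\sin{z(t)}$, where $M(S) = D(S)^{T}\hat{D}(S)K(S)$. Since $\sin{0} = 0$, the point $z = 0$ is an equilibrium, and phase synchronization is exactly the statement $z(t) \to 0$ (equivalently $\theta(t) \to c_{p}\mathbf{1}$ with $c_{p} = 0$, because the input nodes are pinned at zero). The hypothesis that the non-input phases are initialized in a neighborhood of the origin is precisely what makes a linearization-based (hence local) argument sufficient.

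First I would compute the Jacobian of the right-hand side at $z = 0$. Because $z \mapsto \sin{z}$ acts componentwise, its Jacobian is the diagonal matrix $\mathrm{diag}(\cos{z_{e}})$, which equals the identity at $z = 0$. Hence the linearization is $\dot{\tilde{z}}(t) = \tilde{A}\,\tilde{z}(t)$ with $\tilde{A} = -M(S)$, matching Eqn.~\eqref{eq:linearized}. By Lyapunov's indirect method it then suffices to show that $\tilde{A}$ is Hurwitz, i.e., that every eigenvalue of $M(S)$ has strictly positive real part.

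The key step, and the one I expect to be the main obstacle, is to pass from positive definiteness of the symmetric part $R(S) = \tfrac{1}{2}(M(S) + M(S)^{T})$ to a spectral statement about the generally non-symmetric matrix $M(S)$. For any eigenpair $(\lambda, v)$ of $M(S)$ with $v \neq 0$ (allowing $\lambda$ and $v$ to be complex), the Hermitian form gives $v^{*}M(S)v = \lambda\, v^{*}v$; adding this to its conjugate and using that $M(S)$ is real yields $\mathrm{Re}(\lambda) = \dfrac{v^{*}R(S)v}{v^{*}v}$. Since $R(S) > 0$ (positive definiteness of a real symmetric matrix extends to the complex Hermitian form), the right-hand side is strictly positive, so $\mathrm{Re}(\lambda) > 0$ for every eigenvalue of $M(S)$, and therefore every eigenvalue of $\tilde{A} = -M(S)$ has strictly negative real part.

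With $\tilde{A}$ Hurwitz, the origin of the linearization is asymptotically stable, and Lyapunov's indirect method transfers this to local asymptotic stability of $z = 0$ for the nonlinear system. Consequently, for initial conditions in a neighborhood of the origin, $z(t) \to 0$ as $t \to \infty$, which is phase synchronization. The only additional care needed beyond the eigenvalue computation is to confirm that $z = 0$ is indeed the relevant equilibrium and that the numerical-range argument applies verbatim to complex eigenvectors; both are routine.
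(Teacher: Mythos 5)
Your proposal follows essentially the same route as the paper's proof: linearize the phase-difference dynamics at the origin to get $\dot{\tilde{z}}(t) = -M(S)\,\tilde{z}(t)$, show that $R(S) > 0$ forces every eigenvalue of $M(S)$ to have strictly positive real part, and conclude local asymptotic stability (hence phase synchronization) for initial conditions near the origin. Your write-up is in fact slightly more complete than the paper's, since you supply the Hermitian-form computation $\mathrm{Re}(\lambda) = \frac{v^{*}R(S)v}{v^{*}v} > 0$ and the explicit appeal to Lyapunov's indirect method, both of which the paper asserts without detail.
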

Following the similar arguments as in Theorem~\ref{thm:input selection}, when the network is initialized in the neighborhood of the origin, removing a subset of rows and columns from matrix $R$  such that $R(S) > 0$ suffices to drive a homogeneous Kuramoto network to phase synchronization via selecting a set of control input nodes $S$. 

\subsection{Frequency synchronization: Heterogeneous case}\label{subsec:Freq_hetro}

In this section we provide the sufficient conditions for attaining frequency synchronization in heterogeneous Kuramoto networks (i.e., $\omega_{i} \neq\omega_{j} $ for at least one pair of  $(i, j)$ such that $i \neq j$) via selecting set of control input nodes. In what follows we write $D^{T}\omega$ instead of $D(S)^{T}\omega(S)$ for simplicity of notation. We discuss how to remove the dependency of $S$ in $D^{T}(S)\omega(S)$ after the proof of Theorem~\ref{theorem:signed-2}.

Our approach is based on analyzing the $\mathcal{L}_{2}$ stability of the system in Figure~\ref{fig:Kuramoto-negative-feedback} via passivity. Passivity of $H_{1}$ and $\delta$-ISP of $H_{2}$ together with the small gain theorem in Proposition~\ref{prop:L2_stability_Kuramoto}, gives the following bound on the $\mathcal{L}_{2}$ gain of the system.



\begin{lemma}\label{cor:L2_stability_Kuramoto}
	
	Let $\delta = \lambda_{min}(R(S))$. Then for all $t$ and any $z(0)$, we have that  $$\left(||\sin{z(t)}||_{L_{2}}^{[0,t]}\right)^{2} \leq \frac{1}{\delta^{2}}||D^{T}\omega||_{2}^{2}t.$$
	\vspace{-2mm}
	\newline
	\begin{proof} 
		Passivity of $H_{1}$ is confirmed via Eqn.~\eqref{eq:storage} by choosing the same storage function given in the proof of Theorem~\ref{thm:suf_homo}.
		
		Next, to show $\delta$-ISP of $H_{2}$, note that $$u_{2}(t)^{T}y_{2}(t) = u_{2}(t)^{T}R(S)u_{2}(t) \geq \lambda_{min}(R(S))u_{2}(t)^{T}u_{2}(t).$$ Thus by choosing $V(z) = 0$, we obtain $$\dot{V}(z) = 0 \leq u_{2}(t)^{T}y_{2}(t) - \lambda_{min}(R(S))u_{2}(t)^{T}u_{2}(t),$$ implying $\delta$-ISP with $\delta = {\lambda_{min}(R(S))}$.  
		
		By Proposition~\ref{prop:L2_stability_Kuramoto}, we have that for any input $r(t)$, $$\frac{||\sin{z(t)}||_{L_{2}}^{[0,T]}}{||r(t)||_{L_{2}}^{[0,T]}} \leq \frac{1}{\delta}.$$ Choosing $r(t) \equiv D^{T}\omega$ then yields $$||\sin{z(t)}||_{L_{2}}^{[0,T]} \leq \frac{1}{\delta}\left(\int_{0}^{T}{||D^{T}\omega||_{2}^{2} \ dt}\right)^{1/2}.$$ Squaring both sides and evaluating the integral gives the desired result.
		\end{proof}
	\end{lemma}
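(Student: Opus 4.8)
The plan is to assemble the lemma directly from the passivity-based decomposition in Figure~\ref{fig:Kuramoto-negative-feedback} together with the small-gain bound of Proposition~\ref{prop:L2_stability_Kuramoto}. The strategy consists of two verification steps followed by a substitution. First I would establish that the nonlinear block $H_{1}$ is passive by reusing the storage function $V(z) = \sum_{e \in E}(1 - \cos{z_{e}(t)})$ from the proof of Theorem~\ref{thm:suf_homo}; the computation in Eqn.~\eqref{eq:storage} already shows $\dot{V}(z) = u_{1}(t)^{T}y_{1}(t)$, which is exactly the passivity inequality of Definition~\ref{prop:passivity-differential} with $d \equiv 0$.

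Second, I would verify that the linear block $H_{2}$, which implements the map $u_{2} \mapsto y_{2} = M(S)u_{2}$, is $\delta$-ISP with $\delta = \lambda_{min}(R(S))$. The key algebraic fact is that $u_{2}^{T}y_{2} = u_{2}^{T}M(S)u_{2} = u_{2}^{T}R(S)u_{2} \geq \lambda_{min}(R(S))\,u_{2}^{T}u_{2}$, where the middle equality holds because the quadratic form sees only the symmetric part $R(S)$ of $M(S)$. Taking the trivial storage function $V \equiv 0$ and $\beta \equiv 0$ in Definition~\ref{def:ISP} then yields $0 \leq y_{2}^{T}u_{2} - \delta u_{2}^{T}u_{2}$, establishing the required $\delta$-ISP property.

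With both blocks verified, Proposition~\ref{prop:L2_stability_Kuramoto} gives that the closed-loop map from the constant reference $r(t) = D^{T}\omega$ to the output $\sin{z(t)}$ has $\mathcal{L}_{2}$ gain at most $1/\delta$, i.e.\ $||\sin{z(t)}||_{L_{2}}^{[0,T]} \leq \frac{1}{\delta}||r(t)||_{L_{2}}^{[0,T]}$. The final step is to evaluate the right-hand side for the constant input: since $r(t) \equiv D^{T}\omega$ is time-invariant, $||r(t)||_{L_{2}}^{[0,T]} = \left(\int_{0}^{T}{||D^{T}\omega||_{2}^{2}\,dt}\right)^{1/2} = ||D^{T}\omega||_{2}\sqrt{T}$, and squaring the resulting inequality reproduces the claimed bound with $t = T$.

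The proof is more a matter of bookkeeping than of genuine difficulty, so I expect the main obstacle to be conceptual rather than technical: one must correctly map the forced Kuramoto dynamics in~\eqref{eq:Kuramoto-with-inputs-equiv} onto the negative-feedback interconnection, honoring the interconnection relations $y_{1} = u_{2}$ and $u_{1} = -y_{2}$ used in Theorem~\ref{thm:suf_homo}, and recognize that the natural-frequency term $D^{T}\omega$ enters the loop as the exogenous reference $r(t)$. Once these signal identifications are fixed, passivity of $H_{1}$ is inherited verbatim from Theorem~\ref{thm:suf_homo}, and the only new computation is the eigenvalue lower bound for the quadratic form defining $H_{2}$.
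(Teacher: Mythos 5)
Your proposal is correct and follows the paper's own proof essentially step for step: passivity of $H_{1}$ via the storage function from Theorem~\ref{thm:suf_homo}, $\delta$-ISP of $H_{2}$ with $\delta = \lambda_{\min}(R(S))$ using the trivial storage function and the symmetric-part quadratic form, then Proposition~\ref{prop:L2_stability_Kuramoto} with the constant input $r(t) \equiv D^{T}\omega$. The only difference is that you make explicit the identification $u_{2}^{T}M(S)u_{2} = u_{2}^{T}R(S)u_{2}$ and the evaluation $||r||_{L_{2}}^{[0,T]} = ||D^{T}\omega||_{2}\sqrt{T}$, which the paper leaves implicit.
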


Let $z_{ji}(t) = \theta_{j}(t) - \theta_{i}(t)$ denote the inter oscillator phase angle differences between oscillator $j$ and $i$. In Lemma~\ref{lem:pre_suf}, we provide a preliminary sufficient condition required for the frequency synchronization in heterogeneous Kuramoto networks\footnote{Similar results have been appeared in literature for heterogeneous Kuramoto networks with positive couplings \cite{wang2012exponential,schmidt2012frequency}}. 

\begin{lemma}\label{lem:pre_suf}
	Heterogeneous Kuramoto networks achieve frequency synchronization if the phase angles of the oscillators are bounded for all $t$ such that $z_{ji}(t) \in (\pi/2, 3\pi/2)$ for all edges $(j,i)$ with $K_{ji} < 0$ and $z_{ji}(t) \in (-\pi/2, \pi/2)$ for all edges $(j,i)$ with $K_{ji} > 0$.
	\newline
	\begin{proof}
		By taking the time derivative of Eqn.~\eqref{eq:Kuramoto-with-inputs}, we have
		\begin{equation}
		\label{eq:time_derivative}
		\ddot{\theta}(t) = -\hat{D}(S)K_{c}(S)D(S)^{T}\dot{\theta}(t),
		\end{equation}
		where matrix $K_{c}(S)$ is a diagonal matrix with $[K_{c}(S)]_{ee} = K_{e}\cos(z_{e}(t))$, where $e = (j,i)$ for each $i,j \in \{1, \ldots, n\}$ and $i \notin S$. Note that matrix $\hat{D}(S)K_{c}(S)D(S)^{T}$ is the weighted Laplacian matrix of the graph $\mathcal{G}(S) = \{V\backslash S, E \backslash E(S)\}$ with time-varying weights $K_{e}\cos(z_{e}(t))$.
		
		Furthermore the following hold if each $z_{e}(t)$ satisfy the conditions given in theorem statement: (i) for all $e \in E$ such that $K_{e} < 0$ we have $cos(z_{e}(t)) < 0$ which gives $K_{e}cos(z_{e}(t)) > 0$. (ii) for all $e \in E$ such that $K_{e} > 0$, we have $\cos(z_{e}(t)) > 0$ which gives $K_{e}cos(z_{e}(t)) > 0$. 
		
		 When  all the weights, $[K_{c}(S)]_{ee}$, are positive for all $t$, the matrix $-D(S)^{T}\hat{D}(S)K_{c}(S)$ has nonnegative off diagonal entries (Metzler) with zero row sums. Then by Proposition~\ref{prop: LTV consensus}, $\ddot{\theta}(t)$ converge to a common value as $t \rightarrow \infty$.
	\end{proof}
\end{lemma}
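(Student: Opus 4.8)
The plan is to reduce the frequency–synchronization question to a consensus problem for a linear time-varying (LTV) system and then invoke Proposition~\ref{prop: LTV consensus}. Recall that frequency synchronization means $\dot{\theta}(t)\to c_f\mathbf{1}$, and since the input nodes are pinned at phase $0$ (hence $\dot\theta_i\equiv 0$ for $i\in S$) this common value is forced to be $0$. The natural idea is to differentiate the dynamics once in time so that $\dot{\theta}(t)$ itself becomes the state of an LTV system whose instantaneous coupling is governed by the cosines of the phase differences $z_e(t)$.

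First I would differentiate Eqn.~\eqref{eq:Kuramoto-with-inputs} with respect to $t$. Applying the chain rule to $\sin(D(S)^{T}\theta(t))$ produces a diagonal matrix of cosine terms, yielding
\begin{equation*}
\ddot{\theta}(t) = -\hat{D}(S)\,\mathrm{diag}\!\big(K_e\cos z_e(t)\big)\,D(S)^{T}\dot{\theta}(t),
\end{equation*}
where $z(t)=D(S)^{T}\theta(t)$. The next step is to recognize the matrix $\hat{D}(S)\,\mathrm{diag}(K_e\cos z_e(t))\,D(S)^{T}$ as a weighted Laplacian of the subgraph induced on the non-input nodes $V\setminus S$, with \emph{time-varying} edge weights $K_e\cos z_e(t)$. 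Setting $v(t)=\dot{\theta}(t)$ then gives an LTV system $\dot v(t)=A(t)v(t)$ of exactly the form required by Proposition~\ref{prop: LTV consensus}.

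The crux is to show these effective weights are positive for all $t$, which is precisely what the hypothesis delivers: for an edge $(j,i)$ with $K_{ji}<0$ the interval $z_{ji}(t)\in(\pi/2,3\pi/2)$ forces $\cos z_{ji}(t)<0$, while for $K_{ji}>0$ the interval $z_{ji}(t)\in(-\pi/2,\pi/2)$ forces $\cos z_{ji}(t)>0$; in both cases the product $K_{ji}\cos z_{ji}(t)>0$. With all weights positive, $-A(t)$ is a genuine Laplacian, so $A(t)$ has nonnegative off-diagonal entries (Metzler) and vanishing row sums, matching the standing hypotheses of Proposition~\ref{prop: LTV consensus}. Concluding then amounts to verifying the remaining persistent-connectivity condition and reading off consensus of $\dot\theta(t)$.

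I expect the main obstacle to be making this last step rigorous. Two points need care. (i)~The zero-row-sum property is exact only once the pinned input nodes are carried along explicitly: edges originating in $S$ contribute ``leak'' terms to $A(t)$ at input-adjacent non-input nodes, so one must augment the consensus system with the input nodes (whose states satisfy $\dot\theta_i\equiv 0$) for the Metzler/zero-row-sum structure to hold and for the common value to be pinned to $0$ rather than an arbitrary constant. (ii)~Proposition~\ref{prop: LTV consensus} requires a uniformly reachable root node $k$ in the $\Delta$-digraph of $\int_{t}^{t+\hat{T}}A(\tau)\,d\tau$; establishing this uniform connectivity over sliding windows is delicate because $\cos z_e$ vanishes at the endpoints $\pm\pi/2$. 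To get a single threshold $\Delta>0$ valid for all windows one in fact needs the phase differences to remain in a \emph{compact} subinterval strictly interior to the stated open intervals, so that the edge weights are uniformly bounded below by a positive constant. Arguing this uniform boundedness away from the boundary, rather than merely assuming it, is the technical heart of the proof.
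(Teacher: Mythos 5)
Your proposal follows essentially the same route as the paper's proof: differentiate Eqn.~\eqref{eq:Kuramoto-with-inputs} to get the LTV system $\ddot{\theta}(t) = -\hat{D}(S)K_{c}(S)D(S)^{T}\dot{\theta}(t)$, use the hypothesis to conclude every effective weight $K_{e}\cos z_{e}(t)$ is positive so the system matrix is Metzler with zero row sums, and invoke Proposition~\ref{prop: LTV consensus} to obtain consensus of $\dot{\theta}(t)$. The two caveats you flag are in fact points the paper's proof passes over silently --- the grounded system on $V\setminus S$ alone has nonpositive rather than zero row sums, so the pinned input nodes must be carried along explicitly, and the uniform threshold $\Delta>0$ in Proposition~\ref{prop: LTV consensus} requires the weights to stay bounded away from the cosine zeros at the interval endpoints --- so your version is, if anything, the more careful rendering of the same argument.
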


Note that the bounds given in Lemma~\ref{lem:pre_suf} for $z(t)$ should satisfy for all $t$. Therefore, next we explore the initial conditions, $z_{ji}(0)$, that will achieve these bounds for all $t$.

In what follows we assume that the initial phase angles in Eqn.~\eqref{eq:Kuramoto-with-inputs-equiv} satisfies the following conditions.

\begin{assumption}\label{asmp:Phase_Angle_Init}
	The initial state $z(0)$ satisfies $z_{ij}(0) \in (\pi/2, 3\pi/2)$ for all edges $(i,j)$ with $K_{ji} < 0$ and $z_{ij}(0) \in (-\pi/2, \pi/2)$ for all edges $(i,j)$ with $K_{ji} > 0$. 
\end{assumption}
The following theorem establishes the sufficient condition for frequency synchronization in Kuramoto networks with heterogeneous natural frequencies.

The rest of this section is organized as follows. In Theorem~\ref{theorem:signed-2} we present sufficient conditions for frequency synchronization in heterogeneous Kuramoto networks.  Then we present set of additional results in Lemma~\ref{lemma:PI-1} and Theorem~\ref{theorem:signed} that are required for the proof of Theorem~\ref{theorem:signed-2}. Next, we present the proof of Theorem~\ref{theorem:signed-2} and discuss frequency synchronization in heterogeneous Kuramoto networks via selecting a minimum set of control inputs. Finally, we wrap-up this section by presenting set of graph structures that enable conditions given in Assumption~\ref{asmp:Phase_Angle_Init}.

\begin{theorem}
	\label{theorem:signed-2}
	Suppose that the input set $S$ is chosen such that $\lambda_{\min}(R(S))$ is strictly bounded below by $||D^{T}\omega||_{2}$. If Assumption~\ref{asmp:Phase_Angle_Init} is satisfied then heterogeneous Kuramoto networks achieve frequency synchronization.
\end{theorem}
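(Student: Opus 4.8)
The plan is to reduce the theorem to the positive invariance of the ``good'' phase region and then certify that invariance using the $\mathcal{L}_{2}$ gain bound of Lemma~\ref{cor:L2_stability_Kuramoto}. By Lemma~\ref{lem:pre_suf}, frequency synchronization follows as soon as we show that $z_{ji}(t)$ stays in $(\pi/2,3\pi/2)$ for every edge with $K_{ji}<0$ and in $(-\pi/2,\pi/2)$ for every edge with $K_{ji}>0$, for all $t\geq 0$. Call this set of admissible phase differences $\mathcal{Z}$. Assumption~\ref{asmp:Phase_Angle_Init} guarantees $z(0)\in\mathcal{Z}$, so the entire content of the theorem is the claim that $\mathcal{Z}$ is positively invariant under \eqref{eq:Kuramoto-with-inputs-equiv} whenever $\lambda_{\min}(R(S))>\|D^{T}\omega\|_{2}$.

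First I would record the geometry of $\mathcal{Z}$: on a positive edge the interval $(-\pi/2,\pi/2)$ is exactly the set where $\cos z_{e}>0$, and on a negative edge the interval $(\pi/2,3\pi/2)$ is exactly where $\cos z_{e}<0$; in both cases $K_{e}\cos z_{e}>0$ in the interior, which is precisely the sign condition that Lemma~\ref{lem:pre_suf} exploits to make $-D(S)^{T}\hat{D}(S)K_{c}(S)$ Metzler with zero row sums. The key observation is that $\partial\mathcal{Z}$ is reached only when some coordinate attains $z_{e}=\pm\pi/2\pmod{\pi}$, i.e.\ exactly when $|\sin z_{e}|=1$. Hence leaving $\mathcal{Z}$ forces $\|\sin z(t)\|_{\infty}$ to hit $1$, so an excursion out of $\mathcal{Z}$ is detectable through the magnitude of $\sin z$.

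Second, I would turn the hypothesis into a strict bound. Setting $\delta=\lambda_{\min}(R(S))$, the assumption $\delta>\|D^{T}\omega\|_{2}$ together with Lemma~\ref{cor:L2_stability_Kuramoto} gives
\begin{equation*}
\frac{1}{t}\int_{0}^{t}\|\sin z(\tau)\|_{2}^{2}\,d\tau \;\leq\; \frac{\|D^{T}\omega\|_{2}^{2}}{\delta^{2}} \;<\; 1
\end{equation*}
for every $t$, so the time-averaged squared norm of $\sin z$ stays strictly below the unit level that characterizes $\partial\mathcal{Z}$. This averaged estimate is the quantitative engine behind the auxiliary results Lemma~\ref{lemma:PI-1} and Theorem~\ref{theorem:signed}; I would use them to upgrade it into the pointwise guarantee that a trajectory started in $\mathcal{Z}$ never reaches $\partial\mathcal{Z}$, most cleanly via a first-exit-time argument. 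Let $t^{*}$ be the first time a coordinate touches the boundary; on $[0,t^{*})$ all sign conditions $K_{e}\cos z_{e}>0$ hold, so the reduced system behaves as a time-varying consensus process, and the goal is to show this forces $|\sin z_{e}(t^{*})|$ strictly below $1$, contradicting $z(t^{*})\in\partial\mathcal{Z}$. Positive invariance of $\mathcal{Z}$ then lets me invoke Lemma~\ref{lem:pre_suf} to conclude frequency synchronization.

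The main obstacle I anticipate is exactly this bridge from an $\mathcal{L}_{2}$ (time-averaged) estimate to a pointwise barrier statement. At any point of $\partial\mathcal{Z}$ we have $\|\sin z\|_{2}^{2}\geq 1$ because some coordinate satisfies $\sin^{2}z_{e}=1$, so the averaged bound lying strictly below $1$ is strongly suggestive but not conclusive: an instantaneous value can transiently exceed its running average. Closing this gap needs more than the raw gain bound --- either a differential-inequality/barrier certificate controlling $\frac{d}{dt}\|\sin z\|^{2}$ as the trajectory approaches $\partial\mathcal{Z}$, or a direct exploitation of the feedback structure in Figure~\ref{fig:Kuramoto-negative-feedback}. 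This is almost certainly the role of the unseen Lemma~\ref{lemma:PI-1} and Theorem~\ref{theorem:signed}, and I would budget most of the effort there, treating the reduction to invariance and the final appeal to Lemma~\ref{lem:pre_suf} as routine bookkeeping.
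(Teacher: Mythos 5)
Your plan is correct and follows the paper's proof essentially step for step: the paper establishes positive invariance of the admissible region by the same first-exit-time contradiction, noting that a boundary touch at $t^{\ast}$ forces $\sin z_{e}(t^{\ast}) \in \{-1,1\}$, which contradicts Theorem~\ref{theorem:signed} (itself obtained by feeding the gain bound of Lemma~\ref{cor:L2_stability_Kuramoto} into Lemma~\ref{lemma:PI-1}), and then concludes via Lemma~\ref{lem:pre_suf}. The averaged-to-pointwise gap you flag is precisely what Lemma~\ref{lemma:PI-1} closes --- its integral hypothesis holds uniformly over all admissible initial conditions, so by time invariance any transient excursion above the level $1-\epsilon$ could be restarted just before the excursion and would violate the integral bound on that window --- so no additional barrier certificate is needed beyond the cited results.
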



In the following lemma we show that a nonnegative continuous function is upper bounded for all $t$ if its integral and initial conditions are upper bounded.
  
\begin{lemma}
\label{lemma:PI-1}
Let $f$ be a nonnegative continuous function. Suppose that, whenever $z(0)$ satisfies $f(z_{0}) \leq C$ for a constant $C$, $$\int_{0}^{t}{f(z(\tau)) \ d\tau} \leq tC$$ for all $t$. 
Then for any $z(0)$ satisfying $f(z(0)) \leq C$, we have $f(z(t)) \leq C$ for all $t$.	
\end{lemma}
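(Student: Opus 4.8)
The plan is to argue by contradiction, exploiting the fact that the hypothesis is quantified over \emph{every} initial condition in the sublevel set $\{f \le C\}$, not merely one fixed trajectory. Suppose some trajectory $z(\cdot)$ with $f(z(0)) \le C$ nevertheless satisfies $f(z(t_1)) > C$ at some $t_1 > 0$. Writing $g(t) := f(z(t))$, which is continuous since $f$ is continuous and $z(\cdot)$ is a continuous solution of \eqref{eq:Kuramoto-with-inputs-equiv}, I would first locate the last instant before $t_1$ at which the bound still holds: set $t_0 := \sup\{t \in [0,t_1] : g(t) \le C\}$. Because $g$ is continuous, the set $\{t : g(t) \le C\}$ is closed, so $t_0$ belongs to it and $g(t_0) \le C$; since $g(t_1) > C$ we have $t_0 < t_1$, and by right-continuity $g(t_0) = \lim_{t \downarrow t_0} g(t) \ge C$. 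Hence $g(t_0) = C$ and $g(t) > C$ for every $t \in (t_0, t_1]$.

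The decisive step is to restart the dynamics at time $t_0$. Since the governing system \eqref{eq:Kuramoto-with-inputs-equiv} is autonomous, the shifted trajectory $\tilde z(s) := z(t_0 + s)$ is itself a solution, with initial condition $\tilde z(0) = z(t_0)$ satisfying $f(\tilde z(0)) = C \le C$. The hypothesis therefore applies to $\tilde z$ and yields $\int_0^s f(\tilde z(\tau))\, d\tau \le sC$ for all $s$; taking $s = t_1 - t_0$ and changing variables gives $\int_{t_0}^{t_1} f(z(u))\, du \le (t_1 - t_0)C$. On the other hand, $g$ is continuous, equals $C$ at $t_0$, and is strictly larger than $C$ on the positive-length interval $(t_0, t_1]$, so $\int_{t_0}^{t_1} f(z(u))\, du > (t_1 - t_0)C$. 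These two inequalities are contradictory, which refutes the existence of $t_1$ and establishes $f(z(t)) \le C$ for all $t$.

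I expect the main obstacle to be conceptual rather than computational: recognizing that a bound on the time-\emph{average} of $f$ cannot by itself force the pointwise bound, so the proof must leverage the universal quantifier in the hypothesis together with the time-invariance of the dynamics to relaunch the integral estimate from the critical crossing time $t_0$. The remaining delicate points are the continuity argument pinning down $g(t_0) = C$ and the strictness of the final integral inequality; both rely on $f$ being continuous and on $(t_0, t_1]$ having positive length, so that the region where $g > C$ contributes a strictly positive excess to the integral.
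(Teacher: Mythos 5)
Your proof is correct and takes essentially the same route as the paper's: argue by contradiction, use time-invariance of the autonomous dynamics to restart the trajectory at the boundary of the sublevel set $\{f \le C\}$, and contradict the hypothesized time-averaged bound with a strict integral inequality over the excursion interval. Your execution is in fact cleaner than the paper's: by restarting exactly at the crossing point where $g(t_0) = C$ (which the hypothesis $f(z_0) \le C$ permits, since it allows equality) and obtaining strictness from continuity of $g - C$ on the positive-length interval $(t_0, t_1]$, you avoid the paper's $\epsilon$-shifted restart at $t_0(\epsilon)$ with $f(z(t_0(\epsilon))) \le C - \epsilon$ and the accompanying piecewise bookkeeping with the times $t^{\ast}, t^{\ast\ast}, t_1, t_2$ and the margin $\eta/2$.
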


\begin{proof}
Suppose the result does not hold. Then there exists $z$ and $t$ such that $f(z(t)) > C$ for some time $t$ when $z(0) = z$. Since the trajectory of $z(t)$ is continuous, $f(z(t))$ is continuous as well. Define some notations as follows: 
\begin{eqnarray*}
t^{\ast} &=& \inf{\{t : f(z(t)) > C\}} \\
t^{\ast\ast} &=& \inf{\{t : f(z(t)) \leq C, \ t > t^{\ast}\}} \\
\overline{C} &=& \sup{\{f(z(t)) : t \in [t^{\ast}, t^{\ast\ast}]\}} \\
\eta &=& \overline{C}-C \\
t_{1} &=& \inf{\{t: f(z(t)) = C+\eta/2, t_{1} > t^{\ast}\}} \\
t_{2} &=& \inf{\{t : f(z(t)) > C + \eta/2, t > t_{1}\}} 	
\end{eqnarray*}	
For any $\epsilon \hspace{-0mm}> 0$, let $t_{0}(\epsilon) = \sup{\{t : f(z(t)) \leq C-\epsilon, t< t^{\ast}\}}.$ Note that $(t^{\ast}-t_{0}(\epsilon))$ goes to zero as $\epsilon$ goes to zero. Then
	\begin{eqnarray*}
	\int_{t_{0}(\epsilon)}^{t^{\ast\ast}}{f(z(t)) \ d\tau} \hspace{-2mm}&=& \hspace{-2mm}\int_{t_{0}(\epsilon)}^{t^{\ast}}{f(z(t)) \ d\tau} + \int_{t^{\ast}}^{t_{1}}{f(z(t))\ d\tau} \\
	\hspace{-3mm}&& \hspace{-2mm}+ \int_{t_{1}}^{t_{2}}{f(z(t)) \ d\tau} + \int_{t_{2}}^{t^{\ast\ast}}{f(z(t)) \ d\tau} \\
	\hspace{-3mm}&>& \hspace{-2mm}(t^{\ast}-t_{0}(\epsilon))(C-\epsilon) + C(t_{1}-t^{\ast}) \\
	\hspace{-3mm}&& \hspace{-2mm}+ (C+\frac{\eta}{2})(t_{2}-t_{1}) + C(t^{\ast\ast}-t_{2}) \\
	\hspace{-3mm}&=&\hspace{-2mm} (t^{\ast\ast}-t_{0}(\epsilon))C + \frac{\eta}{2}(t_{2}-t_{1}) \\
	\hspace{-3mm}&&\hspace{-2mm}- \epsilon(t^{\ast}-t_{0}(\epsilon)) > C(t^{\ast\ast}-t_{0}(\epsilon)).
	\end{eqnarray*}
for $\epsilon$ sufficiently small. Let $z_{0} \hspace{-0.5mm}= \hspace{-0.7mm} z(t_{0}(\epsilon))$, so that $f(z(t))\hspace{-0.7mm}\leq\hspace{-0.5mm}C$. Set $z(0) = z_{0}$ and choose $t = t^{\ast\ast}-t_{0}(\epsilon)$. By time invariance, $\int_{0}^{t}{f(z(t))\ d\tau} > Ct$ is a contradiction.
\end{proof}

Theorem~\ref{theorem:signed}, shows that the function $\sin{z(t)}$ has lower and upper bounds, $-1 + \epsilon$ and $1 - \epsilon$ for $0 < \epsilon < 1$ when Assumption~\ref{asmp:Phase_Angle_Init} is satisfied and the smallest eigenvalue of $R(S)$ is strictly bounded below by $||D^{T}\omega||_{2}$ and $z(0)$.

\begin{theorem}
\label{theorem:signed}
Suppose that the input set $S$ is chosen such that the $\lambda_{\min}$ is strictly bounded below by $||D^{T}\omega||_{2}$. If Assumption~\ref{asmp:Phase_Angle_Init} is satisfied then $||\sin{z(t)}||_{\infty} \hspace{-0.5mm}<\hspace{-0.5mm}1$ for all $t$. 
\end{theorem}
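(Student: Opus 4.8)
The plan is to read off the target $\|\sin z(t)\|_{\infty} < 1$ as an invariance statement and obtain it by feeding the $\mathcal{L}_2$ estimate of Lemma~\ref{cor:L2_stability_Kuramoto} into the integral-to-pointwise argument of Lemma~\ref{lemma:PI-1}. The crucial choice is the scalar quantity to track: I would take $f(z) = \|\sin z\|_{\infty}^2 = \max_e \sin^2 z_e$, which is nonnegative and continuous (a finite max of continuous functions), rather than the spatial $2$-norm.

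First I would record the two quantitative consequences of the hypotheses. Writing $\delta = \lambda_{\min}(R(S))$, the assumption $\delta > \|D^{T}\omega\|_2$ gives $C_0 := \|D^{T}\omega\|_2^2/\delta^2 < 1$. Lemma~\ref{cor:L2_stability_Kuramoto} then says that for \emph{every} initial condition and every $t$, $\int_0^t \|\sin z(\tau)\|_2^2\,d\tau \le C_0\,t$; this "for any $z(0)$" is essential, since Lemma~\ref{lemma:PI-1} restarts the trajectory from intermediate times. Using the elementary bound $\|v\|_\infty \le \|v\|_2$, valid for any vector, I get $f(z) \le \|\sin z\|_2^2$ pointwise, hence $\int_0^t f(z(\tau))\,d\tau \le C_0\,t$ for all $t$ and all $z(0)$. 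Next I would translate Assumption~\ref{asmp:Phase_Angle_Init} into a bound on $f(z(0))$: the arcs $(-\pi/2,\pi/2)$ and $(\pi/2,3\pi/2)$ are precisely the open intervals on which $|\sin|<1$, the value $\pm1$ occurring only at the excluded endpoints, so $|\sin z_e(0)|<1$ for every edge $e$ and therefore $f(z(0)) = \|\sin z(0)\|_\infty^2 < 1$.

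To close, I would invoke Lemma~\ref{lemma:PI-1} with $C = \max\{C_0,\, f(z(0))\}$, which is strictly below $1$ because both arguments of the maximum are. Since $C \ge C_0$, the integral bound $\int_0^t f(z(\tau))\,d\tau \le C_0 t \le Ct$ holds for every $z(0)$, so in particular it holds whenever $f(z(0)) \le C$, verifying the hypothesis of Lemma~\ref{lemma:PI-1}; and by construction $f(z(0)) \le C$. The lemma then returns $f(z(t)) \le C < 1$ for all $t$, that is $\|\sin z(t)\|_\infty < 1$ for all $t$, as required.

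The delicate point will be the choice of $f$ and the constant, not any hard analysis. Taking $f = \|\sin z\|_2^2$ would break the argument: Assumption~\ref{asmp:Phase_Angle_Init} only controls the initial phases edge by edge (an $\infty$-norm condition), and $\|\sin z(0)\|_2^2$ can exceed $1$ on a graph with many edges even when every $|\sin z_e(0)| < 1$, so one could not start Lemma~\ref{lemma:PI-1}. Switching to $f = \|\sin z\|_\infty^2$ and transporting the integral bound through $\|\cdot\|_\infty \le \|\cdot\|_2$, then enlarging the constant from $C_0$ to $\max\{C_0, f(z(0))\}$ (still $<1$), is exactly what reconciles the $\mathcal{L}_2$ estimate with the initial-condition hypothesis.
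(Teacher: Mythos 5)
Your proposal is correct, and it follows the same skeleton as the paper's proof---feed the $\mathcal{L}_2$ estimate of Lemma~\ref{cor:L2_stability_Kuramoto} into the integral-to-pointwise invariance of Lemma~\ref{lemma:PI-1}---but with one genuinely different and consequential choice: the tracked function. The paper takes $f(z) = \|\sin z\|_{2}^{2}$, applies Lemma~\ref{lemma:PI-1} with $C = 1-\epsilon = \|D^{T}\omega\|_{2}^{2}/\delta^{2}$, and only downgrades to the $\infty$-norm at the very end. However, Lemma~\ref{lemma:PI-1} also requires $f(z(0)) \leq C$, and the paper never verifies this hypothesis; since Assumption~\ref{asmp:Phase_Angle_Init} is an edge-by-edge ($\infty$-norm) condition, $\|\sin z(0)\|_{2}^{2}$ can exceed $1 > 1-\epsilon$ on any graph with several edges even when every $|\sin z_{e}(0)| < 1$, so the paper's application of the lemma is, as written, incomplete---precisely the failure mode you identify in your last paragraph. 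Your choice $f(z) = \|\sin z\|_{\infty}^{2}$, combined with the pointwise bound $\|\cdot\|_{\infty} \leq \|\cdot\|_{2}$ to transport the integral estimate, and the enlarged constant $C = \max\{C_{0}, f(z(0))\} < 1$ (with $f(z(0)) < 1$ exactly because Assumption~\ref{asmp:Phase_Angle_Init} confines each $z_{e}(0)$ to the open arcs where $|\sin| < 1$), is precisely what the initial-condition hypothesis of Lemma~\ref{lemma:PI-1} can certify, so your version closes the gap while reaching the same conclusion $\|\sin z(t)\|_{\infty}^{2} \leq C < 1$ for all $t$. Your observation that the ``for any $z(0)$'' clause of Lemma~\ref{cor:L2_stability_Kuramoto} is essential---because the proof of Lemma~\ref{lemma:PI-1} restarts the trajectory from intermediate times via time invariance---is also correct and is left implicit in the paper. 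In short: the paper's route is shorter but relies on an unverified (and in general false) initial-norm bound; yours costs two extra lines and is the argument the paper should have given.
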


\begin{proof}
Let $f(z) = ||\sin{z(t)}||_{2}^{2}$. Let $\epsilon$ satisfy $||D^{T}\omega||_{2}^{2}/\delta^{2} = 1-\epsilon$. Then from Lemma~\ref{cor:L2_stability_Kuramoto}, for all $t$, 
\begin{eqnarray*}
\int_{0}^{T}{||\sin{z(t)}||_{2}^{2} \ dt} \leq \frac{1}{\delta^{2}}||D^{T}\omega||_{2}^{2} T = (1-\epsilon)T	
\end{eqnarray*}
	Then by Lemma \ref{lemma:PI-1},  $\hspace{-0.5mm}||\sin{z(t)}||_{\infty}^{2} \hspace{-0.5mm}\leq\hspace{-0.5mm} ||\sin{z(t)}||_{2}^{2} \leq (1-\epsilon)$.
\end{proof}

In the following we provide the proof of Theorem~\ref{theorem:signed-2} using the results from Lemma~\ref{lem:pre_suf} and Theorem~\ref{theorem:signed}.

\textit{Proof of Theorem~\ref{theorem:signed-2}:}
Let $e = (j,i)$ for each $i,j \in \{1, \ldots, n\}$ and $i \notin S$. Then using contradiction we first prove that $z_{e}(t)$ satisfies the bounds given in Lemma~\ref{lem:pre_suf} if $z_{e}(0)$ satisfies the bounds in Assumption~\ref{asmp:Phase_Angle_Init}.

Suppose $z_{e}(t)$ does not satisfies the bounds in Lemma~\ref{lem:pre_suf} when $z_{e}(0)$ satisfies the bounds in Assumption~\ref{asmp:Phase_Angle_Init}. Suppose there exists $t$ such that $z_{e}(t) \notin [\pi/2, 3\pi/2]$ for some $e$ with $K_{e} < 0$. Since $z_{e}(0)$ is in this region and $z_{e}$ is continuous, there exists $t^{\ast} < t$ such that $z_{e}(t^{\ast}) \in \{\pi/2, 3\pi/2\}$. 

This, however, implies $\sin{z_{e}(t^{\ast})} \in \{-1,1\}$, contradicting Theorem~\ref{theorem:signed}. The contradiction in the case where there exists $t$ with $z_{e}(t) \notin [-\pi/2,\pi/2]$ for some $e$ with $K_{e} > 0$ is similar. Then the result follows from Lemma~\ref{lem:pre_suf}. $\null\hfill{\blacksquare}$

Following the similar arguments as in Theorem~\ref{thm:input selection}, when a heterogeneous Kuramoto network satisfies conditions in Theorem~\ref{theorem:signed-2}, removing a subset of rows and columns from matrix $R$  such that $\lambda_{\min}(R(S))$ is bounded below by $||D^{T}\omega||_{2}$ suffices to attain frequency synchronization via selecting a set of control input nodes $S$. We note that $||D^{T}\omega||_{2}$ depends on the set $S$. In order to remove this dependency, we can replace $||D^{T}\omega||_{2}$ with an uniform upper bound independent of $S$. Derivation of one such upper bound is given below.

\begin{theorem}\label{thm:bound}
Let $\bar{\delta}$ be defined as
\begin{equation}\label{eqn:upper_bound}
\bar{\delta}  = \sqrt{\sum_{(j,i) \in E} \max\{(\omega_{j} - \omega_{i})^{2}, \omega_{i}^{2}, \omega_{j}^{2}\}}.
\end{equation}
Then $\max_{S}{\{||D(S)^{T}\omega(S)||_{2}\}} \leq \overline{\delta}$.
\newline
\begin{proof}
	We have that 
		\begin{equation*}
		\begin{split}
		||D(S)^{T}\omega(S)||_{2}^{2} &= \sum_{(j,i) \in E(\overline{S},\overline{S})}(\omega_{j} - \omega_{i})^{2} + \sum_{(j,i) \in E(S,\overline{S})}\omega_{i}^{2} \\
			                                  &\leq \sum_{(j,i) \in E} \max\{(\omega_{j} - \omega_{i})^{2}, \omega_{i}^{2}, \omega_{j}^{2}\}.
		\end{split}
		\end{equation*}
		The results follows by taking the square root of the both sides of the above inequality.
\end{proof}
\end{theorem}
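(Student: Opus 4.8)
The plan is to bound $\|D(S)^T\omega(S)\|_2^2$ directly by writing it as a sum of squares over the edges that survive the removal of the input set $S$, and then to replace each surviving term by the maximum of the three candidate values appearing in the definition of $\bar\delta$. The key observation is that after the input nodes in $S$ are fixed to zero, only two types of edges contribute to the vector $D(S)^T\omega(S)$: the internal edges $E(\overline S,\overline S)$ between two non-input nodes, and the edges $E(S,\overline S)$ running from an input node into a non-input node. (Edges incoming to input nodes, i.e.\ $E(\overline S,S)$ and $E(S,S)$, are deleted when forming $D(S)$, and thus never contribute.)

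First I would compute the $e$-th entry of $D(S)^T\omega(S)$ for each type of surviving edge. For an internal edge $e=(j,i)\in E(\overline S,\overline S)$, the corresponding column of $D(S)$ has a $+1$ in the row for $i$ and a $-1$ in the row for $j$, so the entry is $\omega_i-\omega_j$, contributing $(\omega_j-\omega_i)^2$ to the squared norm. For an edge $e=(j,i)\in E(S,\overline S)$ with $j\in S$, the row for $j$ has been removed, so only the $+1$ entry for $i$ survives and the entry equals $\omega_i$, contributing $\omega_i^2$. Summing over all surviving edges yields exactly the first displayed equality in the statement.

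Next I would bound the sum term-by-term. Each term $(\omega_j-\omega_i)^2$ is at most $\max\{(\omega_j-\omega_i)^2,\omega_i^2,\omega_j^2\}$, and each term $\omega_i^2$ is likewise at most the same maximum over its edge. Extending the sum from the surviving edges to the full edge set $E$ (every omitted term is nonnegative) and dominating each surviving term by the corresponding maximum gives
\begin{equation*}
\|D(S)^T\omega(S)\|_2^2 \;\le\; \sum_{(j,i)\in E}\max\{(\omega_j-\omega_i)^2,\omega_i^2,\omega_j^2\} \;=\; \bar\delta^2 .
\end{equation*}
Since this bound holds for every choice of $S$ and its right-hand side does not depend on $S$, taking the maximum over $S$ and then the square root yields $\max_S\{\|D(S)^T\omega(S)\|_2\}\le\bar\delta$, which is the claim.

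This argument is essentially bookkeeping, so I do not anticipate a genuine obstacle; the only point requiring care is the correct identification of which edges survive and what their contributions are under the re-indexing and the input-node conventions introduced earlier (in particular, verifying that edges directed \emph{into} $S$ drop out while edges \emph{from} $S$ into $\overline S$ contribute a single $\omega_i^2$ term). Getting the edge partition right is what makes the first equality exact rather than merely an inequality, and it is the step I would double-check against the definition of $D(S)$.
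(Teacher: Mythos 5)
Your proposal is correct and follows essentially the same route as the paper: expand $\|D(S)^{T}\omega(S)\|_{2}^{2}$ edgewise into the $(\omega_{j}-\omega_{i})^{2}$ contributions from $E(\overline{S},\overline{S})$ and the $\omega_{i}^{2}$ contributions from $E(S,\overline{S})$, dominate each term by the per-edge maximum, extend the sum to all of $E$, and take square roots. The extra care you devote to verifying the columns of $D(S)$ (edges into $S$ being deleted, edges from $S$ contributing a single $+1$ entry) is exactly the bookkeeping the paper leaves implicit, so there is no gap.
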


The set of conditions in Assumption~\ref{asmp:Phase_Angle_Init} is not feasible in general for all types of network graphs. For example, a graph with oscillators $i$ and $j$ with differently signed couplings between $(i,  j)$ and $(j,  i)$ does not satisfy the conditions in Assumption~\ref{asmp:Phase_Angle_Init}. Hence,  next we study set of network graph structures that satisfy the conditions given in  Assumption~\ref{asmp:Phase_Angle_Init}.

Let $E_p^{l}(i,j)$ and $E_n^{l}(i,j)$ denote the number of positive and negative edges in a path $l$  between the nodes $i$ and $j$ in $\mathcal{G}$ such that path length is strictly larger than one, respectively. Then the following results in Theorem~\ref{thm:Undirected_init}, Corollary~\ref{cor:Cycle init} and Corollary~\ref{cor:Tree_init}  present set of graphs that satisfy the conditions given in  Assumption~\ref{asmp:Phase_Angle_Init}. Proofs of Theorem~\ref{thm:Undirected_init}, Corollary~\ref{cor:Cycle init} and Corollary~\ref{cor:Tree_init} are given in Appendix.

\begin{theorem}\label{thm:Undirected_init}
	Any undirected or directed oriented graph that fulfills one of the following conditions for each $(i,j) \in E$ satisfies Assumption~\ref{asmp:Phase_Angle_Init}.
	\begin{enumerate}
		\item If $K_{ij} > 0$, then for a path $l$ such that $(E_p^{l}(i,j) - E_n^{l}(i,j)) \geq 0$ requires $(E_p^{l}(i,j) - E_n^{l}(i,j))\mod4 \in \{0,1\}$ and for a path $l$ such that $(E_n^{l}(i,j) - E_p^{l}(i,j)) \geq 0$ requires $(E_n^{l}(i,j) - E_p^{l}(i,j))\mod4 \in \{1,3\}$ 
		\item If $K_{ij} < 0$, then for a path $l$ such that $(E_p^{l}(i,j) - E_n^{l}(i,j)) \geq 0$ requires $(E_p^{l}(i,j) - E_n^{l}(i,j))\mod4 \in \{2,3\}$ and for a path $l$ such that $(E_n^{l}(i,j) - E_p^{l}(i,j)) \geq 0$ requires $(E_n^{l}(i,j) - E_p^{l}(i,j))\mod4 \in \{1,2\}$
	\end{enumerate}
		
\end{theorem}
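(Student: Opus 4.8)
The plan is to recast Assumption~\ref{asmp:Phase_Angle_Init} as a static realizability question and then exhibit an explicit initial phase vector. Recall from the sign analysis preceding Lemma~\ref{lem:pre_suf} that the interval conditions on $z(0)$ are equivalent to requiring $\cos z_e(0) > 0$ on every positive edge and $\cos z_e(0) < 0$ on every negative edge; that is, the endpoints of a positive edge must be phase-aligned and those of a negative edge anti-aligned, each within an open half circle. Since $z(0) = D^{T}\theta(0)$, it suffices to construct a single node-phase vector $\theta(0)$ whose pairwise differences land in the prescribed half circles, after which $z(0)$ inherits the required bounds. The theorem thus reduces to a realizability statement for the signed graph, and the path quantities $E_p^{l}(i,j)$ and $E_n^{l}(i,j)$ are precisely the data controlling when such a $\theta(0)$ exists.

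The construction I would use is a quarter-turn labeling. Assign each node an integer label and set its phase to a multiple of $\pi/2$, propagating the label by $+1$ across each positive edge and by $-1$ across each negative edge (respecting orientation on directed oriented graphs, and either orientation on undirected graphs, where the half-circle constraint is direction-symmetric). Along any path $l$ from $i$ to $j$ this yields $\theta_i(0) - \theta_j(0) \equiv (E_p^{l}(i,j) - E_n^{l}(i,j))\tfrac{\pi}{2} \pmod{2\pi}$, so the admissible differences are dictated by $(E_p^{l} - E_n^{l}) \bmod 4$: residue $0$ gives difference $0$ (correct for a positive edge), residue $2$ gives $\pi$ (correct for a negative edge), while residues $1$ and $3$ land on the half-circle boundary where $\cos = 0$. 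The hypotheses of the theorem are exactly the statement that, for every edge $(i,j)$ and every parallel path $l$, the residue never hits the fatal value ($2$ for a positive edge, $0$ for a negative edge); the remaining boundary residues $1,3$ are then repaired by an arbitrarily small perturbation of the phases into the open interval. The split according to the sign of $E_p^{l}-E_n^{l}$ reflects which boundary ($+\pi/2$ versus $-\pi/2$) is approached, and hence which perturbation direction is admissible, so I would treat the two sign cases separately and match each allowed residue set to a feasible perturbation.

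The step I expect to be the main obstacle is upgrading these per-edge, per-path residue conditions into a single globally consistent phase vector. Because the label increments live in $\mathbb{Z}_4$, a consistent real-valued potential $\theta(0)$ exists if and only if the increments sum to an admissible residue around every cycle; combining the direct edge $(i,j)$ with a parallel path $l$ produces exactly such a cycle, and the theorem's hypotheses force its net increment into the admissible classes. The plan is therefore to fix a spanning tree, propagate the quarter-turn labels from an arbitrary root, and verify that every non-tree edge closes up within its half circle by applying the residue conditions to the tree path joining its endpoints; the delicate part is checking that the boundary perturbations at residues $1,3$ can be chosen coherently across all edges without creating a new violation, which is where the sign-dependent case analysis is essential. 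Once this consistency is established, $\theta(0)$ meets the half-circle condition on every edge, $z(0)=D^{T}\theta(0)$ satisfies Assumption~\ref{asmp:Phase_Angle_Init}, and the result follows; the companion Corollary~\ref{cor:Cycle init} and Corollary~\ref{cor:Tree_init} should then drop out by specializing the cycle and tree structure.
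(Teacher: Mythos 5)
Your reduction of Assumption~\ref{asmp:Phase_Angle_Init} to a static realizability question for $\theta(0)$, with cycle consistency checked over a spanning tree, is a reasonable framing, but the proposal has a genuine gap at exactly the step that carries the theorem's content. Your quarter-turn potential propagates $\pm 1$ (times $\pi/2$) across \emph{every} edge, so the base vector places every single edge difference precisely on the boundary of its required open interval: a positive edge gets difference $\pm\pi/2$ where Assumption~\ref{asmp:Phase_Angle_Init} demands the open interval $(-\pi/2,\pi/2)$, and a negative edge gets $\pm\pi/2$ where $(\pi/2,3\pi/2)$ is demanded ($-\pi/2\equiv 3\pi/2 \pmod{2\pi}$). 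Consequently the ``arbitrarily small perturbation'' is not a local repair of a few residue-$1,3$ non-tree edges; it must simultaneously push \emph{all} edges strictly into their intervals while respecting the telescoping identity $z_{ij}=\sum_{e\in l} z_e$ around every cycle, and deciding whether such a coherent perturbation exists is precisely the feasibility question Theorem~\ref{thm:Undirected_init} is answering. You explicitly flag this coherence check as ``the delicate part'' and never execute it, so the proposal in effect reduces the theorem to itself.

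A second, structural problem is that your residue bookkeeping is done modulo $2\pi$, and the theorem's conditions are not mod-$2\pi$ statements: for $K_{ij}>0$ the allowed classes are $(E_p^{l}-E_n^{l})\bmod 4\in\{0,1\}$ when $E_p^{l}\geq E_n^{l}$ but $(E_n^{l}-E_p^{l})\bmod 4\in\{1,3\}$ when $E_n^{l}\geq E_p^{l}$; on the circle, $(E_p^{l}-E_n^{l})\bmod 4=3$ and $(E_n^{l}-E_p^{l})\bmod 4=1$ are indistinguishable, so your classification (only residue $2$ fatal for a positive edge, residues $1,3$ repairable) cannot reproduce the sign-dependent split in the statement and admits cases it excludes. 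Relatedly, once a path has two or more edges, the mod-$2\pi$ sum of its open half-circle constraints covers essentially the whole circle, so on the circle parallel paths impose no constraint; the content lives in the real-valued inequalities. That is how the paper proceeds: it keeps $z_{ij}(0)=\theta_i(0)-\theta_j(0)$ real, adds the interval inequalities of Assumption~\ref{asmp:Phase_Angle_Init} edge by edge along each path $l$ joining $i$ and $j$ to obtain an explicit interval for $z_{ij}(0)$ indexed by $(E_p^{l}-E_n^{l})\bmod 4$ (respectively $(E_n^{l}-E_p^{l})\bmod 4$), reads off the residue classes for which that interval meets the direct edge's interval $(-\pi/2,\pi/2)$ or $(\pi/2,3\pi/2)$, and infers existence of $\theta(0)$ from pairwise non-contradiction of the inequalities. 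If you want to salvage your constructive route, center the base assignment in the intervals (difference $0$ across positive edges, $\pi$ across negative edges), carry real-valued per-edge slacks of magnitude strictly less than $\pi/2$, and prove cycle closure with those slacks --- at which point you will have rederived the paper's interval-intersection computation, now with the global feasibility step (which the paper only asserts) made rigorous by your spanning-tree argument.
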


\begin{cor}\label{cor:Cycle init}
	Let $E_p$ and $E_n$ denote the number of positive and negative edges in a cycle graph. Then the underlying cycle graph satisfies Assumption~\ref{asmp:Phase_Angle_Init} if one the following conditions are met. 
	\begin{enumerate}
		\item $(E_p - E_n) \geq 1$ and $(E_p - E_n)\mod4 \in \{1,2\}$
		\item $(E_n - E_p) \geq  2$ and $(E_n - E_p)\mod4 \in \{2,3\}$
	\end{enumerate}
\end{cor}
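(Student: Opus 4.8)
The plan is to obtain Corollary~\ref{cor:Cycle init} as a direct specialization of Theorem~\ref{thm:Undirected_init} to the cyclic topology. The key structural observation is that in a cycle graph, for every edge $(i,j)$ there is exactly one path $l$ between $i$ and $j$ of length strictly greater than one, namely the complementary arc formed by all the remaining edges of the cycle. Hence, instead of having to verify the mod-$4$ conditions of Theorem~\ref{thm:Undirected_init} over an unbounded family of paths, for a cycle it suffices to check them for this single complementary path per edge, which is what turns the general path-based criterion into a global condition on the totals $E_p$ and $E_n$.

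First I would count the positive and negative edges on the complementary path in terms of the cycle totals. If the edge $(i,j)$ is positive, its complementary arc carries $E_p^{l}(i,j)=E_p-1$ positive and $E_n^{l}(i,j)=E_n$ negative edges, so its signed excess is $E_p^{l}-E_n^{l}=(E_p-E_n)-1$; if $(i,j)$ is negative the excess is $(E_p-E_n)+1$. Writing $\Delta=E_p-E_n$, every complementary-path excess is therefore either $\Delta-1$ or $\Delta+1$. Next I would substitute these two values into the four mod-$4$ conditions of Theorem~\ref{thm:Undirected_init}, selecting in each instance the sub-condition dictated by the sign of the excess. In the regime of item~(1), $\Delta\ge 1$, the positive-edge excess $\Delta-1\ge 0$ falls under the first sub-condition of Theorem~\ref{thm:Undirected_init}(1), forcing $(\Delta-1)\bmod 4\in\{0,1\}$, while the negative-edge excess $\Delta+1>0$ falls under the first sub-condition of Theorem~\ref{thm:Undirected_init}(2), forcing $(\Delta+1)\bmod 4\in\{2,3\}$; both simplify to $\Delta\bmod 4\in\{1,2\}$, which is exactly item~(1). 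Verifying that the hypotheses of Theorem~\ref{thm:Undirected_init} hold then delivers Assumption~\ref{asmp:Phase_Angle_Init}.

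For item~(2), $E_n-E_p\ge 2$, i.e. $\Delta\le -2$, both excesses $\Delta-1$ and $\Delta+1$ are negative, so the relevant sub-conditions are those phrased in terms of $E_n^{l}-E_p^{l}$. Setting $\Delta'=E_n-E_p$, the negative edges give $(\Delta'-1)\bmod 4\in\{1,2\}$, i.e. $\Delta'\bmod 4\in\{2,3\}$, which matches item~(2) of the corollary. I would then argue that the accompanying positive-edge constraints are either vacuous or compatible with this congruence.

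The main obstacle I anticipate is precisely this reconciliation step together with the boundary bookkeeping. Two things must be controlled: (i) that the congruence extracted from the positive edges and the one extracted from the negative edges are \emph{simultaneously} satisfiable, so that a single condition on $E_p-E_n$ governs the whole cycle; and (ii) the degenerate configurations — a cycle all of whose edges share one sign, and the threshold cases where an excess equals $0$ or $\pm 1$ and both sub-conditions of Theorem~\ref{thm:Undirected_init} are nominally triggered. The delicate point is that in item~(2) the positive-edge condition reduces to $\Delta'\bmod 4\in\{0,2\}$ while the negative-edge condition reduces to $\{2,3\}$, so for mixed-sign cycles these agree only at $\Delta'\equiv 2$; recovering the full set $\{2,3\}$ therefore hinges on restricting to cycles in which the positive edges impose no additional constraint. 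I expect most of the work to lie in making this compatibility argument precise, while the strict thresholds $E_p-E_n\ge 1$ and $E_n-E_p\ge 2$ emerge naturally from requiring the complementary-path excesses $\Delta-1$ and $\Delta+1$ to have a definite sign.
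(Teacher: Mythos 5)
Your route is the same as the paper's: the published proof of Corollary~\ref{cor:Cycle init} consists entirely of observing that in a cycle each edge $(i,j)$ has exactly one path between $i$ and $j$ of length greater than one (the complementary arc) and then invoking ``similar arguments'' to the proof of Theorem~\ref{thm:Undirected_init}. Your explicit bookkeeping of the complementary-arc excesses, $E_p^{l}-E_n^{l}=\Delta-1$ for a positive edge and $\Delta+1$ for a negative edge with $\Delta=E_p-E_n$, is precisely the computation that sentence leaves implicit, and your treatment of item~(1) is complete and agrees with the paper: both edge signs reduce to $\Delta \bmod 4 \in \{1,2\}$, exactly the stated condition.

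The substantive point is the obstruction you flag in item~(2), and you are right that it is genuine rather than a bookkeeping nuisance. With $\Delta'=E_n-E_p\geq 2$, a negative edge's complementary arc forces $(\Delta'-1)\bmod 4\in\{1,2\}$, i.e.\ $\Delta'\bmod 4\in\{2,3\}$, while any positive edge's complementary arc forces $(\Delta'+1)\bmod 4\in\{1,3\}$, i.e.\ $\Delta'\bmod 4\in\{0,2\}$; so for a mixed-sign cycle the specialization of Theorem~\ref{thm:Undirected_init} certifies only $\Delta'\equiv 2 \pmod 4$, and the residue $3$ is obtained only when $E_p=0$, where the positive-edge condition is vacuous and $\Delta'=E_n$. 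The paper's one-sentence proof never performs this intersection and silently passes over exactly this case, so your proposal is in fact more rigorous than the published argument --- but note that your closing plan (``argue the positive-edge constraints are vacuous or compatible'') cannot succeed within the interval arithmetic of Theorem~\ref{thm:Undirected_init} for mixed cycles with $\Delta'\equiv 3\pmod 4$: on the paper's own terms, item~(2) is justified only for $\Delta'\equiv 2\pmod 4$ or for all-negative cycles. Rescuing the full set $\{2,3\}$ would require leaving this framework and checking feasibility of Assumption~\ref{asmp:Phase_Angle_Init} with phases on the circle (e.g., $E_p=1$, $E_n=4$, phases $0,\pi,0,\pi,0$ satisfy the assumption even though the real-valued interval sums are incompatible), an argument found neither in your proposal nor in the paper.
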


\begin{cor}\label{cor:Tree_init}
	Any Tree graph will satisfy Assumption~\ref{asmp:Phase_Angle_Init}.
		
\end{cor}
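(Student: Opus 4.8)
The plan is to exploit the defining feature of a tree: between any two nodes there is exactly one path, so the graph contains no cycles and hence no consistency constraints linking the phase differences on distinct edges. I would present this as a corollary of Theorem~\ref{thm:Undirected_init} while also making explicit the direct construction that underlies it, since the construction is what clarifies why no conflict can ever arise.

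For the route through Theorem~\ref{thm:Undirected_init}, recall that the theorem imposes its modular conditions on a path $l$ of length strictly larger than one joining the endpoints $i$ and $j$ of an edge $(i,j)$. In a tree the edge $(i,j)$ is itself the unique path between $i$ and $j$, and there is no second path of length greater than one. Hence for every edge the hypotheses of Theorem~\ref{thm:Undirected_init} are vacuously satisfied, and the theorem immediately delivers Assumption~\ref{asmp:Phase_Angle_Init}.

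To exhibit the feasible initialization concretely, I would root the tree at an arbitrary node $r$ and set $\theta_r(0) = 0$. Traversing the tree outward (for instance by breadth-first search), for each edge joining an already-assigned parent $p$ to an unassigned child $c$ I would set $\theta_c(0) = \theta_p(0)$ when the corresponding coupling is positive and $\theta_c(0) = \theta_p(0) + \pi$ when it is negative. Each non-root node is reached through a unique parent edge, so this rule assigns every $\theta_i(0)$ exactly once and never revisits a node. Reading differences with the convention $z_{ij}(0) = \theta_i(0) - \theta_j(0)$, the induced phase difference across a positive edge is $0 \in (-\pi/2, \pi/2)$ and across a negative edge is $\pi \in (\pi/2, 3\pi/2)$, matching exactly the sign-dependent intervals required in Assumption~\ref{asmp:Phase_Angle_Init}.

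The only thing that could fail in a general graph — and the obstruction the tree hypothesis eliminates — is a consistency conflict: if two distinct paths connected a pair of nodes, the phase difference forced along one path could contradict the one forced along the other, which is precisely what the modular conditions of Theorem~\ref{thm:Undirected_init} guard against. Because a tree is acyclic, no such second path exists, so the propagation above is automatically well-defined and the bounds hold on every edge simultaneously. I expect the only point needing genuine care to be the (minor) bookkeeping of reading each edge's sign condition consistently with the chosen difference convention and with the orientation of oriented edges; once that is fixed, the two representative values $0$ and $\pi$ sit strictly inside the required open intervals and the argument is complete.
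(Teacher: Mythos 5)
Your first argument is exactly the paper's proof: in a tree the edge $(i,j)$ is itself the unique path between its endpoints, so no path of length strictly greater than one exists, the modular conditions of Theorem~\ref{thm:Undirected_init} are vacuously satisfied, and Assumption~\ref{asmp:Phase_Angle_Init} follows. Your second argument --- rooting the tree, propagating $\theta_c(0)=\theta_p(0)$ across positive edges and $\theta_c(0)=\theta_p(0)\pm\pi$ across negative ones --- goes beyond the paper: it is a self-contained constructive proof that does not invoke Theorem~\ref{thm:Undirected_init} at all, and it makes explicit why feasibility holds (a tree has $n$ phase variables and only $n-1$ edge constraints, each edge reached exactly once, so the per-edge differences can be assigned independently), whereas the paper's two-sentence proof leaves the reader to supply this. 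You correctly flag the one point of care: with the convention $z_{ij}(0)=\theta_i(0)-\theta_j(0)$, a negative edge traversed against its orientation yields $-\pi\notin(\pi/2,3\pi/2)$, so the sign of the $\pi$ offset must be chosen per edge to respect orientation (or differences read modulo $2\pi$); since each non-root node is assigned through a unique parent edge, this choice is always available and creates no conflict. With that bookkeeping fixed, both routes are correct; the paper's is shorter, yours is more informative.
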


\section{Submodular Algorithm for selecting a minimum set of control inputs}\label{sec:alg} 

In this section we provide an algorithm for selecting a minimum set of control inputs for phase/frequency synchronization in homogeneous Kuramoto networks and frequency synchronization in heterogeneous Kuramoto networks.

Using the results derived in Section~\ref{sec:results}, we formulate the minimum-set control input selection problem as 
\begin{equation}\label{prob:min_eig}
\begin{split}
&\min |S| \\
&\text{s.t.}~\lambda_{\min}(R(S)) = \delta
\end{split}
\end{equation}

Note that when $\delta = 0$ phase and frequency synchronization is achieved in homogeneous Kuramoto networks. 
On the contrary, when  $\delta = \bar{\delta}$ (in Eqn.~\eqref{eqn:upper_bound}) frequency synchronization is achieved in heterogeneous Kuramoto networks.

In what follows, we show that Problem~\ref{prob:min_eig} can be written as a submodular optimization problem.  The formal definition of submodularity of a set function $f$ is given below. 

\begin{definition}
	A set function $f: 2^{V} \rightarrow \mathbb{R}$ is submodular if and only if, for any $S \subseteq T \subseteq V$ and any $v \notin T$,
	$$f(S \cup \{v\}) - f(S) \geq f(T \cup \{v\}) - f(T),$$
	where $2^{V}$ denote the set of all subsets of $V$.
\end{definition}

The following proposition relates the problem of bounding the minimum eigenvalue of a symmetric matrix $R$ below by a constant $\delta>0$ after removing set of corresponding rows and columns in matrix $R$ to a function $Q(E(S))$ which is increasing and submodular in set $E(S)$.
\begin{prop}[\cite{clark2018maximizing}, Lemma 2]
	\label{prop:EigMaxLem}
	Let $E := \{1, \ldots, m\}$ denote the set of indices related to corresponding rows and columns in a $m \times m$ symmetric matrix $R$. Then for any subset, ${E}(S) \subseteq E$ and define a  diagonal matrix of size $m\times m$ by $\text{\textit{diag}}(E(S))$ with $[\text{\textit{diag}}(E(S))]_{ii} = 1$ for all $i \in E(S)$ and $[\text{\textit{diag}}(E(S))]_{ii} = 0$ for all $i \notin E(S)$. Then the following statements are equivalent:
	\begin{enumerate}
		\item  $\lambda_{\min}(R(E \backslash {E}(S))) > \delta$.
		\item There exists a constant $\alpha > 0$ such that $$\lambda_{\min}(R + \alpha \text{\textit{diag}}(E(S))) > \delta.$$
		\item If $\boldsymbol{w}$ is an $m$-dimensional Gaussian random vector with mean $0$ and covariance matrix $I$, then
		\begin{equation}\label{eq:Sub_Q}
		\hspace{-0.5mm} Q(E(S)) \hspace{-1mm}:= \mathbb{E}\big(\min \{w^{T}Rw+\alpha \hspace{-3mm}\sum_{i \in E(S)}\hspace{-2mm}w_{i}^{2}, \delta\}\big) \hspace{-1mm}= \hspace{-0.5mm}\delta.
		\end{equation}
	\end{enumerate}
\end{prop}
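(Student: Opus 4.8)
The plan is to prove the three-way equivalence by establishing the two links $(1)\Leftrightarrow(2)$ and $(2)\Leftrightarrow(3)$ and chaining them. Throughout I would write $P := \mathrm{diag}(E(S))$ for the diagonal indicator matrix, split the index set as $E = \overline{E}(S)\cup E(S)$ with $\overline{E}(S):=E\setminus E(S)$, and reorder rows and columns so that
\[
R = \begin{pmatrix} R_{11} & R_{12} \\ R_{12}^{T} & R_{22} \end{pmatrix}, \qquad P = \begin{pmatrix} 0 & 0 \\ 0 & I \end{pmatrix},
\]
where the first block corresponds to $\overline{E}(S)$. With this convention $R(E\setminus E(S)) = R_{11}$, and the algebraic identity $w^{T}Rw + \alpha\sum_{i\in E(S)}w_{i}^{2} = w^{T}(R+\alpha P)w$ holds; this is the bridge that ties the spectral statements (1)--(2) to the probabilistic statement (3).

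For $(2)\Rightarrow(1)$ I would invoke the Cauchy interlacing theorem: $R_{11}$ is the principal submatrix of the symmetric matrix $R+\alpha P$ indexed by $\overline{E}(S)$ (the term $\alpha P$ vanishes on this block), so $\lambda_{\min}(R_{11}) \ge \lambda_{\min}(R+\alpha P) > \delta$. For the converse $(1)\Rightarrow(2)$ I would use a Schur-complement limit. Writing $R+\alpha P - \delta I$ in block form, its bottom-right block $R_{22}+(\alpha-\delta)I$ is positive definite for all large $\alpha$, and the associated Schur complement $(R_{11}-\delta I) - R_{12}\bigl(R_{22}+(\alpha-\delta)I\bigr)^{-1}R_{12}^{T}$ converges to $R_{11}-\delta I\succ 0$ as $\alpha\to\infty$. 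Since positive definiteness is an open condition, both the pivot block and the Schur complement are positive definite for $\alpha$ sufficiently large, so the Schur criterion gives $R+\alpha P\succ\delta I$, i.e.\ $\lambda_{\min}(R+\alpha P)>\delta$, which is statement~(2).

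For $(2)\Leftrightarrow(3)$, I would fix the $\alpha$ from (2), set $M := R+\alpha P$, and write $Q(E(S)) = \mathbb{E}\bigl[\min\{w^{T}Mw,\delta\}\bigr]$. Since each truncated term is capped at $\delta$, saturation $Q(E(S))=\delta$ is equivalent to the quadratic form clearing the threshold $\delta$ for almost every sampled direction. Here the rotational invariance of $w\sim N(0,I)$ is what makes the characterization clean: the relevant quantity is the Rayleigh quotient $w^{T}Mw/\|w\|^{2}=u^{T}Mu$ with $u=w/\|w\|$ uniform on the sphere, which ranges over $[\lambda_{\min}(M),\lambda_{\max}(M)]$. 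Thus the cap is met for almost every direction if and only if $u^{T}Mu\ge\delta$ on a full-measure set of $u$, and because every nonempty open cone carries positive measure this holds precisely when the low end $\lambda_{\min}(M)$ of the Rayleigh range is at least $\delta$. Hence $Q(E(S))=\delta \Leftrightarrow \lambda_{\min}(R+\alpha P)\ge\delta$, matching (2).

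The hard part will be exactly this probabilistic-to-spectral translation. The subtlety is that the quadratic form $w^{T}Mw$ is homogeneous of degree two and vanishes near the origin, so the threshold comparison cannot be made against $w^{T}Mw$ directly but must be carried out on unit-norm directions (equivalently via the Rayleigh quotient); this is the step where the placement of the truncation constant in $Q$ is essential and where the full support / rotation invariance of the Gaussian is used to guarantee that any direction dipping below $\delta$ occupies a positive-measure cone. A secondary technical point is the boundary case $\lambda_{\min}(M)=\delta$, which must be reconciled with the strict inequalities in (1)--(2) as a measure-zero matter. Once this correspondence is pinned down, the two links combine to yield the stated equivalence of (1), (2), and (3).
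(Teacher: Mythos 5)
The paper never proves this proposition: it is imported verbatim from \cite{clark2018maximizing} (Lemma~2) and used as a black box, so your proposal must stand on its own merits rather than be compared to an in-paper argument. Your link $(1)\Leftrightarrow(2)$ does stand: Cauchy interlacing gives $(2)\Rightarrow(1)$ because $R_{11}=R(E\setminus E(S))$ is a principal submatrix of $R+\alpha P$ on which $\alpha P$ vanishes, and your Schur-complement limit for $(1)\Rightarrow(2)$ (pivot block $R_{22}+(\alpha-\delta)I\succ 0$ for large $\alpha$, Schur complement converging to $R_{11}-\delta I\succ 0$, positive definiteness being open) is sound.

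The genuine gap is in $(2)\Leftrightarrow(3)$, and it is exactly the subtlety you flagged and then argued past. With $w\sim N(0,I)$ and $M:=R+\alpha P\neq 0$, the event $\{\|w\|_2^2<\delta/\lambda_{\max}(M)\}$ has positive probability for any $\delta>0$, and on it $w^{T}Mw<\delta$ strictly; since $\min\{w^{T}Mw,\delta\}\leq\delta$ everywhere, this forces $Q(E(S))<\delta$ for every $S$ and every $\alpha$. So statement~(3), read literally with an unnormalized Gaussian, is unattainable whenever $\delta>0$ --- precisely the regime the paper needs ($\delta=\bar{\delta}>0$ in the heterogeneous case; only $\delta=0$ survives literally, where $\mathbb{E}[\min\{w^{T}Mw,0\}]=0\Leftrightarrow M\succeq 0$). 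Your ``translation to unit-norm directions'' therefore does not prove~(3): it silently replaces $Q$ by a different, scale-invariant functional, e.g.\ $\mathbb{E}[\min\{u^{T}Mu,\delta\}]$ with $u=w/\|w\|_2$ uniform on the sphere, or $\mathbb{E}[\min\{w^{T}Mw,\delta\,w^{T}w\}]$ with target value $\delta m$. That is a repair of the statement, not a derivation of it, and a complete proof must adopt such a definition up front.

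Moreover, the boundary case is not ``a measure-zero matter.'' Even for the repaired functional, almost-everywhere nonnegativity of the continuous map $u\mapsto u^{T}Mu-\delta$ on the sphere forces nonnegativity everywhere, including at the minimizing eigenvector, so the correct equivalence is $Q=\delta\Leftrightarrow\lambda_{\min}(R+\alpha P)\geq\delta$ --- non-strict --- while (1) and (2) are strict. This mismatch lives at the matrix level, not on a null event: take $R=\mathrm{diag}(\delta,0)$ with $E(S)=\{2\}$, so that $R_{11}=(\delta)$ and both (1) and (2) fail, since $\lambda_{\min}(R+\alpha P)=\min\{\delta,\alpha\}\leq\delta$ for all $\alpha$; yet for $\alpha\geq\delta$ one has $w^{T}Mw\geq\delta\|w\|_2^2$ pointwise, so the repaired~(3) holds. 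Hence $(3)\Rightarrow(1)$ is false at this boundary under your reading, and your chain does not close: any honest version of the proposition must either weaken (1)--(2) to $\geq$ or constrain $\alpha$ and $\delta$ to exclude such boundary configurations.
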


Using the results presented in Proposition~\ref{prop:EigMaxLem}, we rewrite the problem of minimum-set control input selection as follows.
\begin{equation}\label{prob_set_cover}
	\begin{split}
		&\min |S| \\
		&~\text{s.t.}~Q(E(S)) = \delta
	\end{split}
\end{equation}

Let  ${E}(i) := \{e~|~e=(j,i)~\text{and}~j \in N_{\text{in}}(i)\}$ denote the set of edges incoming to oscillator $i \in \{1, \hdots, n\}$ in the Kuramoto network. Problem~\eqref{prob:min_eig} leads to the following submodular minimum-set control input selection algorithm. Proposition~\ref{prop:sub_bounds} provides the optimality bounds of Algorithm~\ref{alg:input_selection}.
\begin{algorithm}[h]
	\vspace{1mm}
	\caption{Algorithm for selecting a minimum set of control inputs to induce frequency/phase synchronization.
		\label{alg:input_selection}}
	\begin{algorithmic}
		\State \textit {\bf Input: } Symmetric Matrix $R$, $$\delta=
		\begin{cases}
			0,~\text{if network is homogeneous }\\
			\bar{\delta} ~\text{in~Eqn}.~\eqref{eqn:upper_bound},~\text{if network is heterogeneous}
		\end{cases}$$
		\State \textit{\bf Output: } Input node set $S$
	\end{algorithmic}
	\begin{algorithmic}[1]
		\State $S \leftarrow\emptyset$
		\While {$Q(E(S)) < \delta$}
		\State $v^{*} \leftarrow \text{arg max}\{Q(E(S) \cup E(i)) : i \notin S\}$
		\State $S \leftarrow \{v^{*}\}$
		\State  Calculate $Q(E(S))$ using Eqn.~\eqref{eq:Sub_Q} with $R = R(S)$
		\EndWhile
		\Return $S$
	\end{algorithmic}
\end{algorithm}


\begin{prop}\label{prop:sub_bounds}
	Let $T$ denote the total number of iterations Algorithm~\ref{alg:input_selection} takes to terminate. Then define  $S_T$ and $S^{*}$ to be the solution returned by the Algorithm~\ref{alg:input_selection} and optimal solution of the Problem given in \eqref{prob_set_cover}, respectively. The Algorithm~\ref{alg:input_selection} has the following optimality bound of $\log\frac{\delta - \lambda_{\min}(R)}{\delta - Q(E(S_{T-1}))}$, where $S_{T-1}$ denote the set of input nodes returned by algorithm in the iteration $T-1$. 
\newline
	\begin{proof}
		Notice that Algorithm~\ref{alg:input_selection} is a greedy algorithm that solves the submodular set covering problem in \eqref{prob_set_cover}. From \cite{wolsey1982analysis} we obtain the  following optimality bounds for greedy algorithm that considers the submodular function $Q$.
		\begin{equation}
		|S_T| - |S^{*}| \leq |S^{*}| \log\frac{Q(E(S_{T}))-Q(\emptyset)}{Q(E(S_{T}))-Q(E(S_{T-1}))} 
		\end{equation}
		The result follows by observing $Q(E(S_{T})) \hspace{-1mm}= \hspace{-1mm}\delta$ at the convergence of Algorithm~\ref{alg:input_selection} and $Q(\emptyset) = \lambda_{\min}(R)$ from Eqn.~\eqref{eq:Sub_Q}.
	\end{proof}
\end{prop}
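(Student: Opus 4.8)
The plan is to recognize Algorithm~\ref{alg:input_selection} as the classical greedy heuristic for a submodular set cover problem and then invoke the known performance guarantee of Wolsey~\cite{wolsey1982analysis}. First I would record that, by Proposition~\ref{prop:EigMaxLem}, the set function $Q$ defined in Eqn.~\eqref{eq:Sub_Q} is monotone increasing and submodular in the edge set $E(S)$, and that it is capped above by $\delta$ because of the $\min\{\cdot,\delta\}$ inside the expectation. Lifting this to the node level, the composite function $\tilde{Q}(S) := Q(E(S))$, where $E(S) = \bigcup_{i\in S} E(i)$ is a union of the per-node edge blocks $E(i)$, inherits monotonicity and submodularity over the node ground set $V$; this is the standard fact that grouping elements of a submodular function into blocks preserves submodularity. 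Consequently Problem~\eqref{prob_set_cover} is a genuine submodular set cover instance with target value $\delta$, and the greedy step on line~3 that selects the node whose block $E(i)$ maximizes $\tilde{Q}$ is exactly the greedy rule analyzed in~\cite{wolsey1982analysis}.

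Next I would verify that the two boundary quantities appearing in the bound are exactly as claimed. Because $\tilde{Q}$ never exceeds $\delta$, the termination test $Q(E(S))<\delta$ of the while loop can only fail once $Q(E(S_T))=\delta$, which pins down the numerator of the logarithm. The base value is obtained by evaluating Eqn.~\eqref{eq:Sub_Q} at $E(S)=\emptyset$, where the correction term $\alpha\sum_{i\in E(S)} w_i^2$ vanishes, giving $Q(\emptyset)=\lambda_{\min}(R)$ as recorded in Proposition~\ref{prop:EigMaxLem}. With these identifications, Wolsey's analysis yields directly
$$|S_T| - |S^*| \leq |S^*|\,\log\frac{Q(E(S_T))-Q(\emptyset)}{Q(E(S_T))-Q(E(S_{T-1}))},$$
and substituting $Q(E(S_T))=\delta$ together with $Q(\emptyset)=\lambda_{\min}(R)$ produces the advertised factor $\log\frac{\delta-\lambda_{\min}(R)}{\delta-Q(E(S_{T-1}))}$.

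I expect the main obstacle to be the lifting argument from the edge-level function $Q$ to the node-level function $\tilde{Q}$: one must check carefully that selecting whole blocks $E(i)$ at a time preserves submodularity and monotonicity, and that the per-step marginal gains entering Wolsey's telescoping argument are measured on the block-contracted function rather than on $Q$ itself. A secondary point requiring care is confirming that $\delta-Q(E(S_{T-1}))>0$ strictly, so that the denominator is positive and the logarithm is well defined; this holds because $S_{T-1}$ is, by construction, the last iterate for which the while-loop guard $Q(E(S_{T-1}))<\delta$ was still satisfied before the final addition.
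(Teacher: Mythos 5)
Your proposal is correct and takes essentially the same route as the paper: both identify Algorithm~\ref{alg:input_selection} as the greedy heuristic for the submodular set cover problem~\eqref{prob_set_cover}, invoke Wolsey's guarantee from \cite{wolsey1982analysis}, and then substitute $Q(E(S_T))=\delta$ (termination of the while loop) and $Q(\emptyset)=\lambda_{\min}(R)$ into the bound. The two details you flag---that the node-level function $S\mapsto Q(E(S))$ inherits monotonicity and submodularity from the edge-level $Q$ of Proposition~\ref{prop:EigMaxLem}, and that $\delta-Q(E(S_{T-1}))>0$ so the logarithm is well defined---are left implicit in the paper, so your write-up is a slightly more careful rendering of the same argument.
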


\section{Numerical Study}\label{sec:sim} 
In this section we demonstrate the performance of Algorithm~\ref{alg:input_selection} in selecting a minimum set of control inputs for synchronizing Kuramoto networks. 

In what follows, we refer to Algorithm~\ref{alg:input_selection} by \textit{submodular} algorithm for the purpose of comparing the performance of Algorithm~\ref{alg:input_selection} against two other selection algorithms: \textit{greedy} and \textit{random}.  In each iteration greedy algorithm selects an oscillator $i \in \{1, \ldots, n\}$ such that removing rows and columns corresponding to the oscillator $i$ maximizes the $\lambda_{\min}(R)$ until $\lambda_{\min}(R(S)) > \delta$. The random algorithm selects an oscillator uniformly at random in each iteration until $\lambda_{min}(R(S)) > \delta$. We compare the submodular, random, and greedy algorithms with the exact optimal set of inputs, which is computed via exhaustive search.


The simulations were performed for three types of 10 node graphs, namely, Undirected, directed oriented and directed oriented cycle graphs. Each of the data point in the plots of Figure~\ref{fig:sim1} and Figure~\ref{fig:sim2} corresponds to 100 random realizations of each graph type considered.



In Figure~\ref{fig:sim1}, we choose coupling weights of each $\mathcal{G}$ randomly from the interval $[1, 5]$ and then set the number of negative edges present in $\mathcal{G}$ according to the fraction of negative edges considered. The results  suggest that submodular algorithm outperforms the random algorithm in all three graph types compared. The number of control inputs selected by submodular and greedy algorithm are comparable in undirected and directed oriented graph types.  However in directed cyclic graphs we observe that submodular algorithm outperforms both greedy and random algorithms. The average difference between the number of control inputs chosen by subbmodular and optimal algorithms is $0.76$.

\begin{figure*}
	\centering
	\begin{subfigure}[t]{0.32\textwidth}
		\raisebox{-\height}{\includegraphics[width=\textwidth]{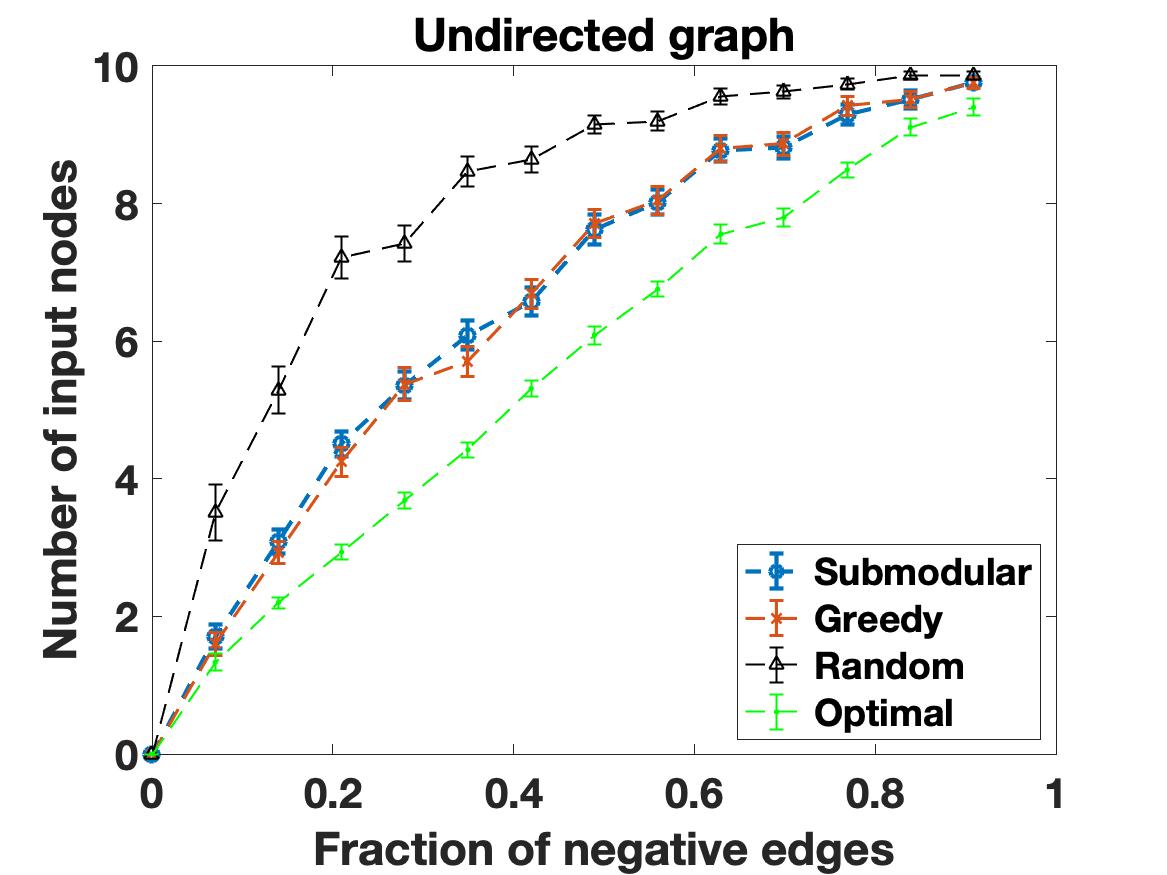}}
		\caption{}
	\end{subfigure}
	\begin{subfigure}[t]{0.32\textwidth}
		\raisebox{-\height}{\includegraphics[width=\textwidth]{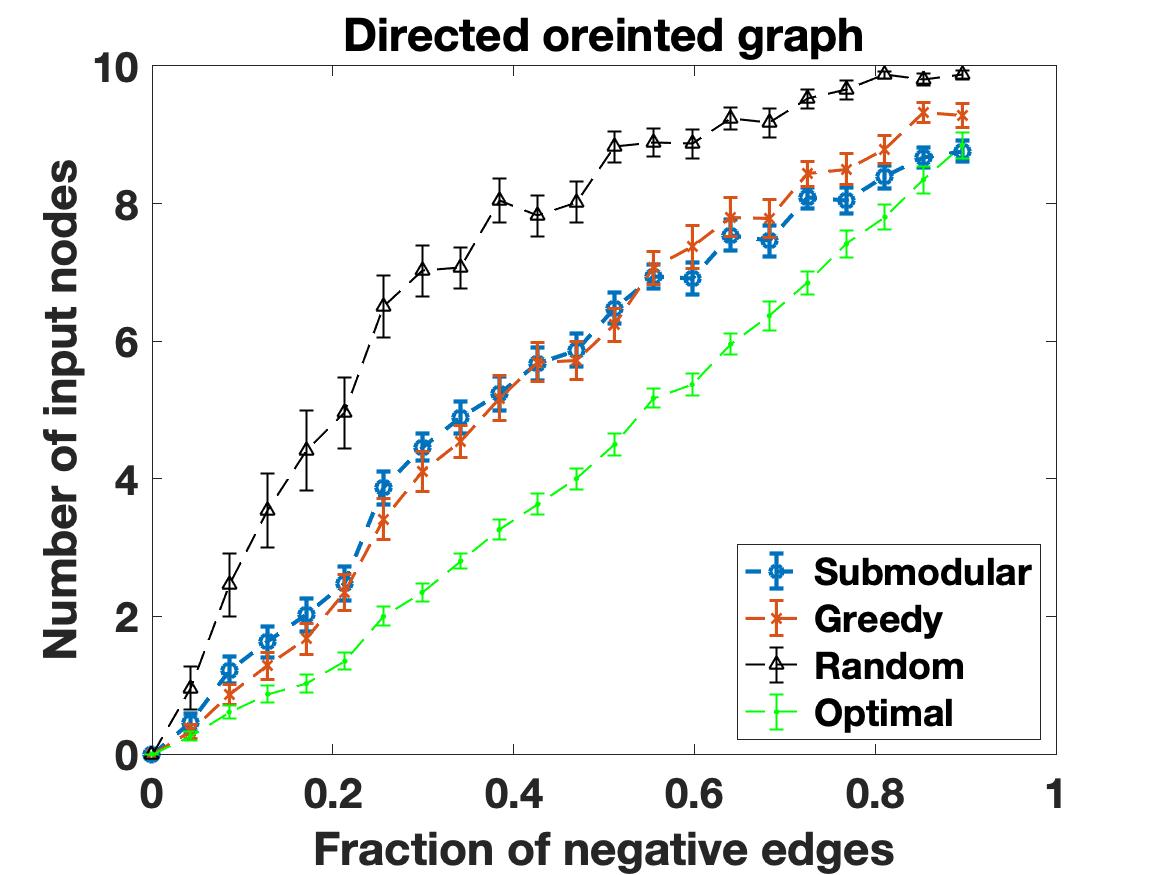}}
		\caption{}
	\end{subfigure}
	\begin{subfigure}[t]{0.32\textwidth}
		\raisebox{-\height}{\includegraphics[width=\textwidth]{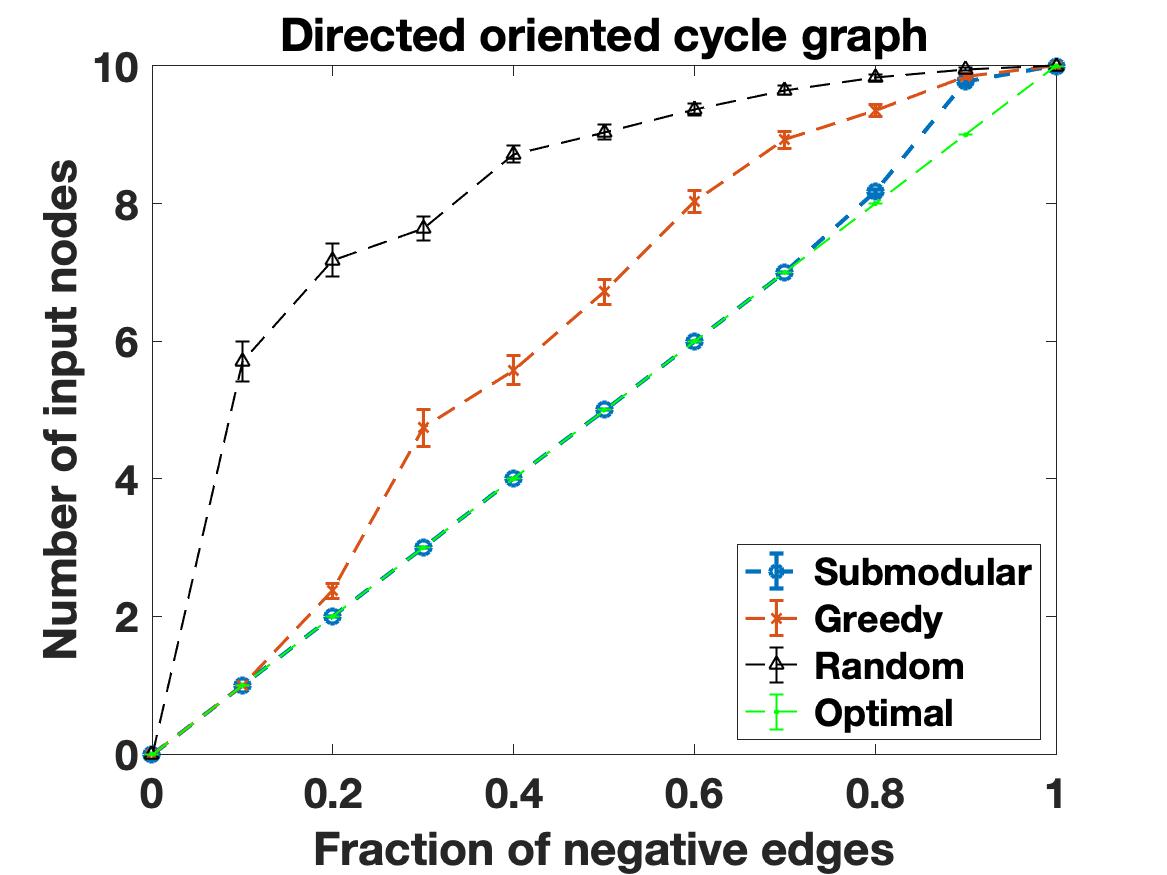}}
		\caption{}
	\end{subfigure}
\caption{For homogeneous Kuramoto networks, number of control input nodes as a function of  negative edges for 10 node graphs. 
	We compare performance of submodular, greedy, random and optimal algorithms with each data point averaged over 100 realizations.}
\label{fig:sim1}
\end{figure*}

In the simulations related to the heterogeneous Kuramoto networks, we use the parameter called Weight-Frequency (WF) parameter which captures the ratio between inter oscillator natural frequency differences $(D^{T}\omega)$ and the in-degree ($d_{\text{in}}(i) = \sum_{(j,i) \in E(i)} K_{ji}$) of an oscillator $i$. WF parameter is defined as 
\begin{equation}\label{eq:WF}
\text{WF} =  ||D^{T}\omega||_{2}/\max(d_{\text{in}}(i)).
\end{equation}


In Figure~\ref{fig:sim2} to we consider minimum-set control input selection for synchronization in heterogeneous Kuramoto networks. In here we set fraction of negative edges to be $0.3$ and randomly selects the natural frequencies of  oscillators from the interval $[0, 2]$. The results suggests that the performance of submodular algorithm is very similar to the results observed in homogeneous Kuramoto networks. The average difference between the number of control inputs chosen by submodular and optimal algorithm is $1.39$. 


\begin{figure*}
	\vspace{-1mm}
	\centering
	\begin{subfigure}[t]{0.32\textwidth}
		\raisebox{-\height}{\includegraphics[width=\textwidth]{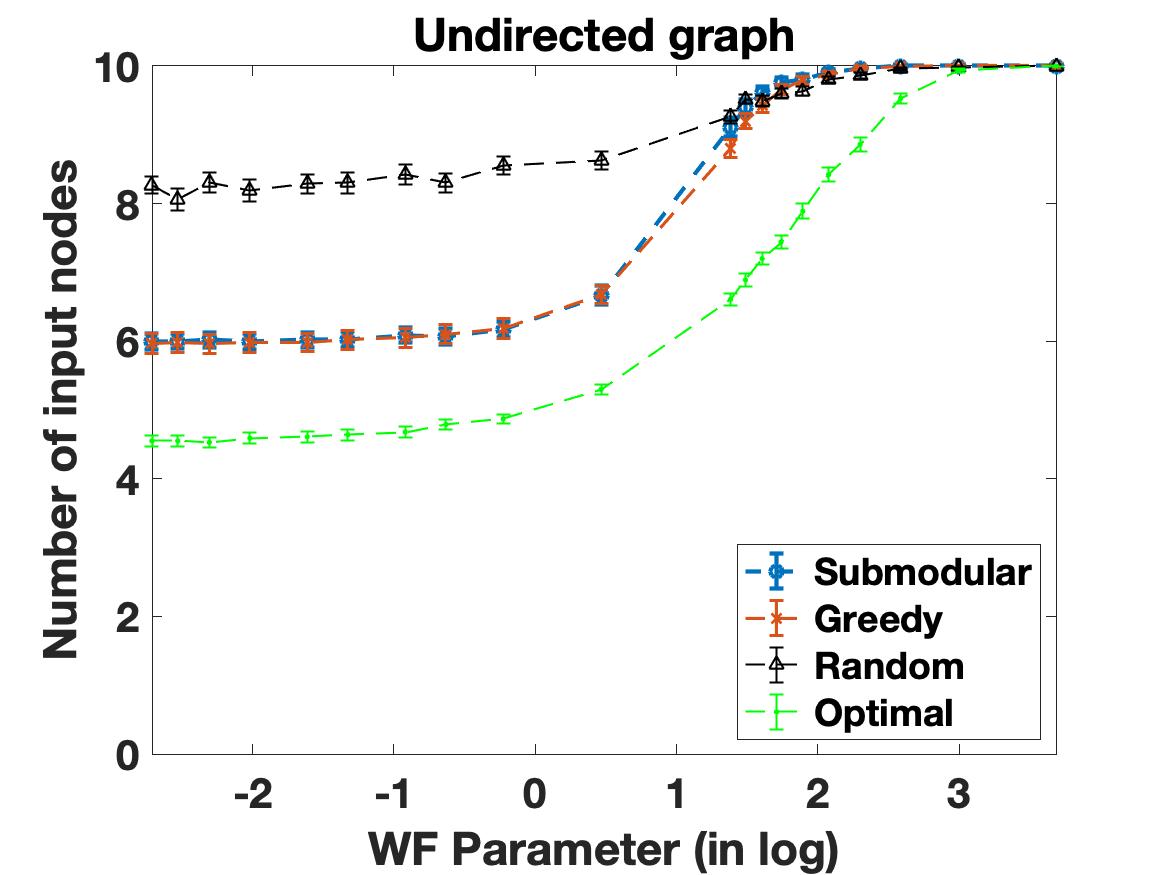}}
		\caption{}
	\end{subfigure}
	\begin{subfigure}[t]{0.32\textwidth}
		\raisebox{-\height}{\includegraphics[width=\textwidth]{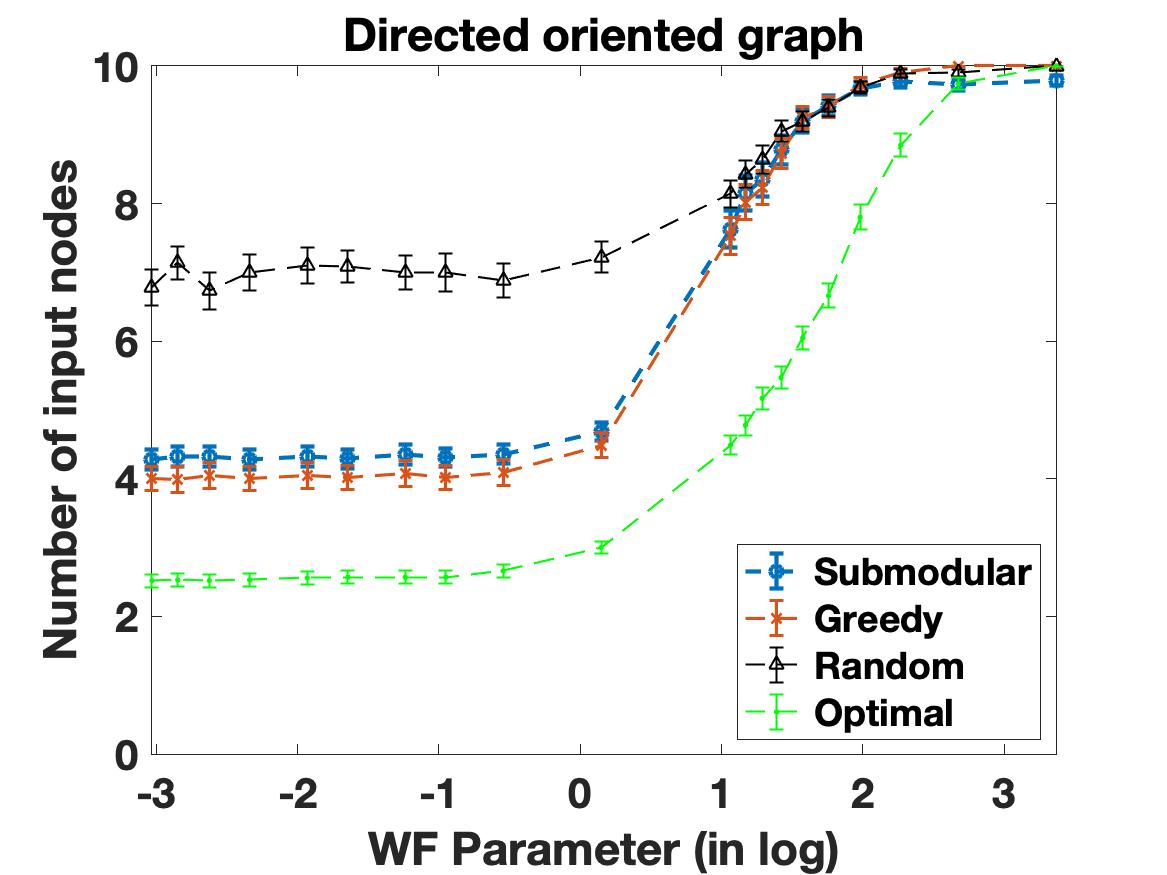}}
		\caption{}
	\end{subfigure}
	\begin{subfigure}[t]{0.32\textwidth}
		\raisebox{-\height}{\includegraphics[width=\textwidth]{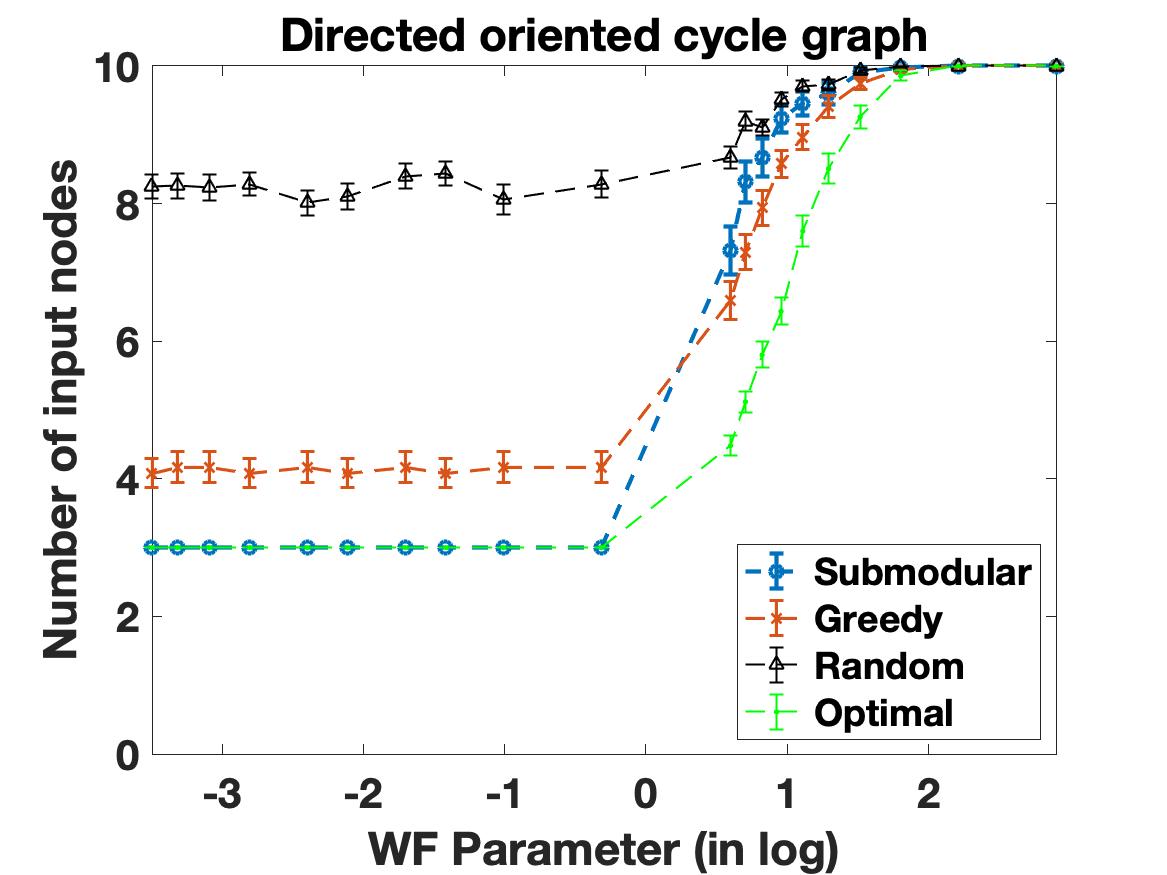}}
		\caption{}
	\end{subfigure}
	\caption{For heterogeneous Kuramoto networks, number of control input nodes as function of  WF parameter defined in Eqn.~\ref{eq:WF} for 10 node graphs. We compare performance of submodular, greedy, random and optimal algorithms with each data point averaged over 100 realizations.}
		
	\label{fig:sim2}
	\vspace{-5mm}
\end{figure*}

\section{Conclusions}\label{sec:end}
We studied the problem of minimum-set control input selection for synchronization of Kuramoto networks. We developed a passivity-based analytical framework to obtain sufficient conditions for synchronization. Our framework enables control input selection in homogeneous and heterogeneous Kuramoto networks with signed couplings under various network typologies. We developed a submodular optimization algorithm for selecting a minimum set of control inputs needed to drive Kuramoto networks towards synchronization. We simulated our algorithm on undirected, directed oriented, and directed oriented cycle graphs.  Future work includes the design of time-varying input signals associated with a minimum set of control inputs in heterogeneous Kuramoto networks for achieving synchronization. 


\bibliographystyle{IEEEtran}
\bibliography{CDC2020}

\appendix\label{appendix}
In this section, we provide proofs of Theorem~\ref{thm:Undirected_init}, Corollary~\ref{cor:Cycle init} and  Corollary~\ref{cor:Tree_init}.

\textit{Proof of Theorem~\ref{thm:Undirected_init}: }
In this proof we let $\mathcal{G}$ to be undirected.
Notice that solution to the set of inequalities given in  Assumption~\ref{asmp:Phase_Angle_Init} does exists if all the distinct inequalities involving any two nodes $i$ and $j$ in $\mathcal{G}$ such that $(i,j) \in E$ do not contradict with each other. Also note that the number of distinct inequalities involving any two nodes $i$ and $j$ in $\mathcal{G}$ is equal to the number of distinct paths between node $i$ and $j$ in $\mathcal{G}$.

First consider the condition~$1)$ where $K_{ij} > 0$ with $z_{ij}(0) = \theta_{i}(0) - \theta_{j}(0)$. Then from Assumption~\ref{asmp:Phase_Angle_Init} we have 
\begin{equation}\label{eq:undirected_01}
-\pi/2 < z_{ij}(0) < \pi/2. 
\end{equation}
Then by adding all the inequalities defined according to Assumption~\ref{asmp:Phase_Angle_Init} corresponding to the edges in the path $l$ between node $i$ and node $j$ gives
\begin{equation}\label{eq:undirected_02}
\begin{split}
-E_{(p-n)}\pi/2 < z_{ij} &< E_{(p-n)}\pi/2~\text{if}~E_{(p-n)}\geq 0 \\ 
E_{(n-p)}\pi/2 < z_{ij} &< E_{(n-p)}3\pi/2 \text{if}~E_{(n-p)} \geq 0,
\end{split}
\end{equation}
where $E_{(p-n)} = (E_p^{l}(i,j) - E_n^{l}(i,j))\mod4$ and $E_{(n-p)} = (E_n^{l}(i,j) - E_p^{l}(i,j))\mod4$. Hence in this case we require $E_{(p-n)} \in \{0,1\}$ or $E_{(n-p)} \in \{1,3\}$ in order for the solutions of the inequalities in Eqns.~\eqref{eq:undirected_01}~and~\eqref{eq:undirected_02}  to coincide.

Next consider the condition~$2)$ where $K_{ij} < 0$. Then from Assumption~\ref{asmp:Phase_Angle_Init} we have 
\begin{equation}\label{eq:undirected_03}
\pi/2 < z_{ij}(0) < 3\pi/2. 
\end{equation}
Then similar to the proof of condition one we can obtain the set of inequalities in Eqn.~\eqref{eq:undirected_02} by summing up all the inequalities corresponding to the edges in the path $l$ between node $i$ and node $j$. Therefore, we require $E_{(p-n)} \in \{2,3\}$ or $E_{(n-p)} \in \{1,2\}$ in order for the solutions of the inequalities in Eqns.~\eqref{eq:undirected_02}~and~\eqref{eq:undirected_03}  to coincide.

Similar arguments follows when the Kuramoto network's graph structure is a directed oriented graph. $\null\hfill{\blacksquare}$

\textit{Proof of Corollary~\ref{cor:Cycle init}:}
Notice that in cycle graphs for each pair of nodes $(i,i+1)$ for $i \in {1, \ldots, n-1}$ and pair $(n,1)$ has exactly one path whose length is greater than one. Then the proof follows form the similar arguments as in the proof of Theorem~\ref{thm:Undirected_init} with $l=1$. $\null\hfill{\blacksquare}$

\textit{Proof of Corollary~\ref{cor:Tree_init}: }
Tree graphs has exactly one path between any pair of nodes $(i,j)$. Hence if $(i,j) \in E$ then there exists no other path between node $i$ and node $j$ such that path length is greater than one. $\null\hfill{\blacksquare}$

\end{document}